\newtheorem{theorem}{Theorem}
\newtheorem{remark}[theorem]{Remark}
  \newcommand{\one}{\mathds{1}}
\newif\ifanonymousforreviewreply\anonymousforreviewreplyfalse
\newcommand{\munu}{\ensuremath {{\mu\nu}}}
\newcommand{\R}{\ensuremath\mathbb{R}}
\newcommand{\ddt}{\ensuremath\frac{\textup{d}}{\textup{d}t}}
\newcommand{\Tren}{\ensuremath\big\langle {T^{\textup{ren}}}\big\rangle_\omega}
\newcommand{\Tmunuren}{\ensuremath\big\langle {T_\munu^{\textup{ren}}}\big\rangle_\omega}
\newcommand{\Tmunurenen}{\ensuremath\big\langle {T_{00}^{\textup{ren}}}\big\rangle_\omega}
\newcommand{\mathcalm}{\ensuremath\raisebox{-1pt}{\includegraphics[scale=1.1]{}}}
\newcommand{\alcdm}{\ensuremath a_{\Lambda\textup{CDM}}}
\title{\Large Special Cosmological Models Derived from the Semiclassical Einstein Equation on Flat FLRW Space-Times}
\author{
\ifanonymousforreviewreply
anonymous authors
\else
Hanno Gottschalk${}^*$, Nicolai Rothe${}^*$ and Daniel Siemssen${}^*$\\ \\{\small ${}^*$School of Mathematics and Natural Science \& IMACM,}\\ {\small University of Wuppertal, D-42119 Wuppertal, Germany}\\
 \texttt{\small $\{$gottschalk,rothe,siemssen$\}$@uni-wuppertal.de}\\[1ex]
\fi
}
\date{{\small \today}}
\begin{document}

\maketitle

\begin{abstract}
    This article presents numerical  work on a special case of the cosmological semiclassical Einstein equation (SCE). The SCE describes the interaction of relativistic quantum matter by the expected value of the renormalized stress-energy tensor of a quantum field with classical gravity. Here, we consider a free, massless scalar field with general (not necessarily conformal) coupling to curvature. In a cosmological scenario with flat spatial sections for special choices of the initial conditions, we observe a separation of the dynamics of the quantum degrees of freedom from the dynamics of the scale factor, which extends a classical result by Starobinski \cite{Starobinski} to general coupling. For this new equation of fourth order governing the dynamics of the scale factor, we study numerical solutions. Typical solutions show a radiation-like Big Bang for the early universe  and de Sitter-like expansion for the late universe. We discuss a specific solution to the cosmological horizon problem that can be produced by tuning parameters in the given equation. Although the model proposed here only contains massless matter, we give a preliminary comparison of the obtained cosmology with the $\Lambda$CDM standard model of cosmology and investigate parameter ranges in which the new models, to a certain extent, is capable of assimilating standard cosmology.   
\end{abstract}{}
\noindent\textbf{Key words:} Semiclassical Einstein equation $\bullet$ cosmology $\bullet$ higher derivative gravity $\bullet$ asymptotic de Sitter solutions
\section{Introduction}

This paper introduces a new set of cosmological equations that emerge as a special case from the semiclassical Einstein equation (SCE). The SCE is proposed as a minimal modification to general relativity that takes quantum matter into account, see e.g.\ \cite{birrell1984quantum,fulling1989aspects,Hack:2015zwa,wald1994quantum}. While the SCE is generally not believed to be a fundamental theory, it is widely studied in situations where the relevant physics takes place on scales that are well separated from Planck scale and in cosmological scenarios \cite{schander2021backreaction}. Many special or approximate cosmological solutions to the SCE have been reported \cite{dappiaggi2008stable,Starobinski,JUAREZAUBRY,juarezaubry2020semiclassical,FredenhagenHack2013,FredenhagenHack2013}.

The mathematical understanding of the SCE has only advanced recently for the case of cosmological applications  \cite{pinamonti2011initial,pinamonti2015global,siemssen-gottschalk,meda2020existence}. The intrinsic reason for the difficulties in formulating the SCE in a mathematically consistent fashion lie in the higher derivatives that occur due to the covariant renormalization of the stress-energy tensor \cite{Moretti:2001qh,Hollands:2004yh} leading to an implicit definition of the dynamical system for the SCE \cite{eltzner2011dynamical}. In \cite{siemssen-gottschalk}, however, the dynamical degrees of freedom of the quantum field were redefined using certain expansions of the two point functions of the quantum field in homogeneous distributions as renormalization scheme and inserting correction terms in order to guarantee the equivalence with the standard Hadamard point splitting renormalization. 
In this way, one obtains a formulation of the cosmological SCE as an explicit, infinite-dimensional dynamical system. Here the (infinitely many) dynamic degrees of freedom of the quantum field enter via the prefactors of the aforementioned expansion and are organized in a `tower of moments'. In particular, this is possible for arbitrary (not necessarily conformal) coupling of the quantized field to the scalar curvature.

The present paper is based on a crucial observation, namely that (a) the dynamic equation for the tower of moments is linear homogeneous and that (b) there are massless physical states for all (not necessarily conformal) couplings with vanishing moments as initial conditions. As a homogeneous linear equation maps zero initial data to the vanishing solution, the dynamics of the quantum field effectively decouples from the dynamics of the scale factor. Therefore we derive a fourth order system of equations for the scale factor where quantum effects enter only via geometric terms. This extends the classical Starobinski cosmologies for the conformal coupling \cite{Starobinski} to general couplings. 

The present paper is devoted to the description and the detailed numerical investigation of this new system of cosmological equations. In particular, we review the derivation and the special assumptions on which these equations are based on. Moreover, we give a few examples of explicit solutions for certain settings of the parameters including Minkowski and de Sitter-phases and discuss approximate solutions that relate to higher order gravity \cite{barrow1988inflation,flanagan2003higher}. These latter approximate solutions incorporate both a radiation like Big Bang {\ifanonymousforreviewreply\color{blue}\fi(sufficiently far off the Planck scale)} with a slow-down in expansion speed and a re-acceleration phase prior to an asymptotic de Sitter phase for the late universe. Thereafter we provide numerical evidence that for large regions of the parameter space the numerical solutions for generic parameter settings reproduce this behavior. 

Furthermore we provide a numerical exploration of solutions in dependence of the parameters. Among these solutions we highlight a subset that relates to a solution of the cosmological horizon problem proposed by N.\ Pinamonti \cite{pinamonti2011initial} and based on the diverging negative conformal time for the Big Bang. 

In addition, we discuss parameter settings that assimilate solutions to our equations to solutions of the $\Lambda$CDM standard model of cosmology. This is done for two reasons: On the one hand, it may be viewed as a case study about how flexible cosmological models from the SCE can be, particularly, to anticipate features of more realistic SCE-cosmologies that involve more suitable forms of matter compared to massless scalar quantum fields. On the other hand, we wish to identify some preliminary ideas on the order of magnitude of parameters that enter the SCE. Here, we apply the $\Delta N_\textup{eff}$-test proposed by T.-P.\ Hack as a test for the SCE at redshift factor $z=3000$, corresponding to the emission of the cosmic microwave background \cite{Hack:2015zwa}. In particular, interesting parameter regions seem to be close (but not restricted) to conformal coupling $\xi=\frac{1}{6}$. These first insights of course require confirmation from models with more realistic compositions of matter.

The qualitative results presented in our paper are in line with prior analytical and numerical work by other authors. While the Minkowski solution to the SCE is obvious, de Sitter phases for conformal coupling and massless fields have been found by A.\ A.\ Starobinski \cite{Starobinski}. Recently, special solutions of de Sitter type have been found by B.\ A.\ Juar\'ez for massless and massive quantum fields for special settings of parameters in the Bunch-Davies vacuum state \cite{JUAREZAUBRY}. Here, we find further such solutions for special parameter sets of our new model's equations that lead to de Sitter type expansion \cite{bd_on_ds}.

Asymptotic de Sitter solutions without introducing a cosmological constant have been observed e.g.\ in \cite{dappiaggi2008stable} for a dynamical system derived from the SCE with a massive quantum field for conformal coupling and approximate KMS-like states, see also the in-depth discussion in \cite{Hack:2015zwa} (and references therein) and the recent study of M.\ H\"ansel \cite{haensel2019}  on the phase diagram of the massless SCE with conformal coupling. All these works, however, are restricted to conformal coupling, whereas, in the present article, we extend this type of results to non-conformal coupling. 

Prior work on the numerics of the SCE has been given by P.\ R.\ Anderson \cite{Anderson1,Anderson2,Anderson3,Anderson4}. In contrast to our work, the quantum fields in the first three articles are only corrections to classical background fields. Here, asymptotically classical solutions close to the Big Bang are found for generic values of the renormalization constants. As the numerics in these works is restricted to the very early universe shortly after the Big Bang, this is in line with our findings of a radiation like Big Bang for massless scalar fields (Anderson also treats massive fields).   In \cite{Anderson4}, where no background fields are assumed, de Sitter solutions are found as well. The conformal \emph{ansatz} to set initial conditions at the Big Bang is again restricted to $\xi=\frac{1}{6}$, which is not the case in our system.

Work on the comparison of SCE-cosmology with the $\Lambda$CDM model can be found in \cite{FredenhagenHack2013,Hack:2010iw,hack2013lambda,Hack:2015zwa}, again mostly for conformal coupling. Here, in particular, we employ similar methods for a parameter screening of the SCE in our special system of cosmological equations with more general coupling.  

Our paper is organized as follows: In Section \ref{sec:Short-derivation-of-sce} we recapitulate the moment formulation of the SCE. Section \ref{sec:Decoupling} derives our special, decoupled cosmological models and proves that they lead to full solutions of the SCE. The subsequent Section \ref{sec:General_discussion_on_the_decoupled_cSCE} discusses special de Sitter solutions, the settings for initial conditions and parameters. Also, the state variable is introduced in order to compare the  matter content generated by the quantum field with the matter content of perfect-fluid Friedmann-type cosmologies. 
Finally, Section \ref{sec:General_discussion_on_the_decoupled_cSCE} contains the first numerical results of this paper. In Section \ref{sec:Numeric-solutions-of-the-cosmological-semiclassical-Einstein-equation} we then provide parameter studies for numerical solutions and also include a short digression into the cosmological horizon problem. Furthermore, w show in Section \ref{sec:ComparisonLCDM} that our solutions can be fitted to the $\Lambda$CDM standard cosmology such that they completely lie in the uncertainty band of the $\Lambda$CDM model. Finally, we identify promising regions for parameters using the $\Delta N_\textup{eff}$-test as suggested in \cite{Hack:2015zwa}. Section \ref{sec:Conclusion} contains our conclusions and some comments on future research.

\section{The moment approach to the cosmological SCE}
\label{sec:Short-derivation-of-sce}

This section introduces our notation and briefly recapitulates the moment approach to the cosmological SCE as introduced in \cite{siemssen-gottschalk}. We consider the semiclassical Einstein equation
\begin{equation}
\label{semiclass-einstein-eq}
	G_\munu=\kappa\Tmunuren
\end{equation}
with the metric's sign convention $(-,+,+,+)$. Here, $G_\munu$ is the Einstein tensor and $\Tmunuren$ is the renormalized stress-energy tensor of a free, scalar and chargeless quantum field. The field dynamics is given by the Klein-Gordon (KG) equation,
\begin{equation}
\label{eq:KG}
    \left[\Box+m^2+\xi R\right]\phi=0,
\end{equation}
where $\Box=-g^{\mu\nu}\nabla_\nu\nabla_\mu$ is the d'Alambertian associated with the Levi-Civita connection $\nabla$ for the Lorentzian metric $g_{\mu\nu}$. $\xi\in\R$ parameterizes the curvature coupling and $m\ge0$ defines the field's mass. The special case $\xi=\frac{1}{6}$ is referred to as conformal coupling.  Let $G_\textup{ret/adv}$ be the retarded and advanced fundamental solutions to the KG-equation, then $\phi$ is quantized such that it fulfills the canonical commutation relations (CCR) $[\phi(x),\phi(y)]=i\left(G_\textup{ret}(x,y)-G_\textup{adv}(x,y)\right)$, see e.g.\ \cite{dimock1980algebras,dimock1982classical,fulling1989aspects,wald1994quantum}. Note that with $\phi$, also $\alpha\phi$ for $\alpha\in \R\setminus\{0\}$ is another legitimate local quantum field. By the CCR in the given shape, we are normalizing the field strength to $\alpha=1$ and we obtain $\kappa=8\pi G_N\alpha^2$, where $G_N$ is Newton's gravitational constant. The field strength remains as a free parameter of the theory, for notational brevity, however, we view $\kappa>0$ as the free parameter of the model.

The expectation value of the renormalized stress-energy tensor $\Tmunuren$ is obtained by subtracting the Hadamard parametrix $H(x,y)$ from the two point function $\omega(\phi(x)\phi(y))$ of the quantum fields, applying a certain partial differential operator to $\omega(\phi(x)\phi(y))-H(x,y)$ and performing the point splitting limit $y\to x$, see \cite{bunch1978quantum,Moretti:2001qh}. If $\omega(\phi(x)\phi(y))-H(x,y)$ is infinitely often differentiable, the state $\omega(\cdot)$ is a so-called Hadamard state \cite{fulling1989aspects,radzikowski1996micro}. The Hadamard parametrix is given by the asymptotic expansion of the singular part of $\omega(\cdot)$ in powers of the Synge world function $\sigma(x,y)$ and 
 \begin{equation}
 \label{hadamard-parametrix-equation}
H(x,y)=\lim_{\varepsilon\to+0}\frac{1}{8\pi^2}\left(\frac{\Delta(x,y)^{\nicefrac{1}{2}}}{\sigma(x,y)+i\varepsilon(t(x)-t(y))}+\log\left(\frac{\sigma(x,y)}{\lambda^2}\right)\sum_{j=0}^n\nu_j(x,y)\,\sigma(x,y)^j\right),
 \end{equation}
where $t$ is a time function, $\Delta(x,y)$ is the van Vleck-Morette determinant and $\sigma(x,y)$, for $x,y$ in a geodesically convex neighborhood, is the Synge world function \cite{fulling1989aspects,wald1994quantum}. The coefficient functions $\nu_j(x,y)$
are obtained recursively by the requirement that the Hadamard parametrix (truncated to order $n$) should fulfill the Klein-Gordon Equation \eqref{eq:KG} (up to powers $\sigma^{n+1}(x,y)$) \cite{fulling1989aspects,Moretti:2001qh}.

In the following, we restrict  to flat cosmological space-times $I\times\R^3$, where $I\subseteq \R$ is a time interval and $\R^3$ is the Cauchy surface. Further, the metric is given by 
\begin{equation}
    g=-dt^2+a(t)^2d\vec{x}^2.\label{eq:cosmological-metric}
\end{equation}
Here, $t=t(x)$ is the cosmological time and $a(t)$ is the scale factor. We apply the convention that $t=t_0=0$ and $a(t_0)=1$ stand for the present state of the universe. {\ifanonymousforreviewreply\color{blue}\fi Note that, throughout this article, $a$ is assumed to be sufficiently large in order to not come amiss to the Planck scale. In particular, if we speak of a \emph{Big Bang}, which usually refers to a zero of $a$, we refer to the phase in direct proximity to such a zero but remote enough to justify the expectation that the SCE is still a valid approximation to any sort of underlying theory in that regime. However, in our mostly numeric approach a typical break-down magnitude for a solver is $a\approx10^{\textup{-}7}$ to $a\approx10^{\textup{-}9}$, which is several orders of magnitude larger than the Planck regime.}  

An alternative way to parameterize space-time is by a conformal time coordinate. Using the scale factor $a(t)$, it is given by $ \tau(t)=\int_{t_0}^t\frac{dt}{a(t)}$. Since $d\tau(t)=\frac{dt}{a(t)}$, we see that $-dt^2+a(t)^2d\vec{x}^2$ $=a^2(\tau)(-d\tau^2+d\vec{x}^2)$, from which we instantly derive the conformal equivalence of the metric on the cosmological space-time with the flat metric on a suitable section of Minkowski space. Here, we use the (slightly misleading) convention $a(\tau)$ for the scale factor $a(t)$ at conformal time $\tau=\tau(t)$. 

To study the dynamics of the SCE in the cosmological context, we wish to cast \eqref{semiclass-einstein-eq} in an initial value form. As described in our previous work \cite{siemssen-gottschalk}, this can be achieved via the following procedure:

    (i) We consider fixed time fields and momenta $\varphi(\tau,\vec{x})=a(\tau)\phi(\tau,\vec{x})$, $\pi(\tau,\vec{x})=\partial_\tau\varphi(\tau,\vec{x})$ and a quasi free state $\omega$ evaluated on these fields
    \begin{equation}
    \label{state-dynamics-equation}
        \mathcal{G}(\tau,r)=\left(\begin{array}{c}\mathcal{G}_{\varphi\varphi}(\tau,r)\\
        \mathcal{G}_{(\varphi\pi)}(\tau,r)\\\mathcal{G}_{\pi\pi}(\tau,r)\end{array}\right)=\lim_{\tau'\to\tau}\left(\begin{array}{c}\omega(\varphi(\tau,\vec{x})\varphi(\tau,\vec{y}))\\
        \frac{1}{2}\omega(\varphi(\tau,\vec{x})\pi(\tau,\vec{y})+\pi(\tau,\vec{x})\varphi(\tau,\vec{y}))\\\omega(\pi(\tau,\vec{x})\pi(\tau,\vec{y}))\end{array}\right)
    \end{equation}
    Here, it is assumed that the state $\omega(\cdot)$ is homogeneous on fixed time fields and isotropic on the flat time sections, i.e.\ does only depend on $r=|\vec{x}-\vec{y}|$. Also note that only the symmetric part of the two point function enters \eqref{state-dynamics-equation} as the anti symmetric part is fixed by the CCR. 
    
    (ii) We rewrite the dynamics of the field \eqref{eq:KG} in conformal time as a dynamical equation for $\mathcal{G}(\tau,r)$ and obtain
    \begin{equation}
        \label{eq:dynTwopoint}
        \partial_\tau \mathcal{G}(\tau,r)=\left(\begin{array}{ccc}0&2&0\\\Delta_r-V&0&1\\0&2(\Delta_r-V)&0\end{array}\right)\mathcal{G}(\tau,r)
    \end{equation}
    with $V=(6\xi-1)\frac{a''}{a}+a^2m^2$ and $\Delta_r=r^{-2}\partial_rr^2\partial_r$.
    
    (iii) Considering the corresponding fixed time formulation for $r=|\vec{x}-\vec{y}|$ at conformal time $\tau$
    \begin{align}
    \begin{split}
    \tilde{\mathcal{H}}(\tau,r)&=\left(\begin{array}{c}\tilde{\mathcal{H}}_{\varphi\phi}(\tau,r)\\
        \tilde{\mathcal{H}}_{(\varphi\pi)}(\tau,r)\\
        \tilde{\mathcal{H}}_{\pi\pi}(\tau,r)\end{array}\right)\\
        &=\left(\begin{array}{c}
    a(\tau)^2H_n((\tau,\vec{x}),(\tau,\vec{y})),\\
    \frac{1}{2}\left(\partial_\tau a(\tau)a(\tau')H_n((\tau,\vec{x}),(\tau',\vec{y}))+\partial_{\tau'} a(\tau)a(\tau')H_n((\tau,\vec{x}),(\tau',\vec{y}))\right)_{\tau'=\tau}\\
    \left(\partial_\tau \partial_{\tau'} a(\tau)a(\tau') H_n((\tau,\vec{x}),(\tau',\vec{y}))\right)_{\tau'=\tau}
    \end{array}\right)
    \end{split}
    \end{align}
    of the Hadamard parametrix \eqref{hadamard-parametrix-equation}, we obtain $\langle T_{\mu\nu}^\mathrm{ren}\rangle$, evaluated at conformal time, by applying a partial differential operator $\mathcal{T}_{\mu\nu}$  to $\mathcal{G}-\mathcal{\tilde H}$. After restricting to the diagonal, this yields a $\R^{4\times4}$ valued tensor function. In the following we denote this restriction to $\vec{x}=\vec{y}$ or $r=0$ by $[\,\cdot\,]$. In addition, terms that express renormalization freedom occur. Given that off-diagonal terms of the stress-energy tensor vanish for flat cosmological space-time, we can express the renormalized stress-energy tensor via its energy component $\langle T_{00}^\mathrm{ren}\rangle$ and trace $\langle T^\mathrm{ren}\rangle =g^{\mu\nu}\langle T^\mathrm{ren}_{\mu\nu}\rangle$, see \cite{siemssen-gottschalk}. Therewith,
    \begin{equation}\label{eq:trace-FLRW}\begin{split}
  {\langle T^\mathrm{ren} \rangle} &= \bigl( (6\xi-1) (\xi R + m^2) - m^2 \bigr)\frac{1}{a^2} [\mathcal{G}_{\varphi\varphi}-\tilde{ \mathcal{H}}_{\varphi\varphi}] - \frac{6\xi-1}{a^4} \bigl( [\Delta_r (\mathcal{G}_{\varphi\varphi}-\tilde{ \mathcal{H}}_{\varphi\varphi}) ]
  \\&\quad+ \frac{1}{a^2}[\mathcal{G}_{\pi\pi}-\tilde{ \mathcal{H}}_{\pi\pi}] +\frac{{a'}^2}{a^4}[\mathcal{G}_{\varphi\varphi}-\tilde{\mathcal{H}}_{\varphi\varphi}]-2\frac{a'}{a^3}[\mathcal{G}_{(\varphi\pi)}-\tilde{\mathcal{H}}_{(\varphi\pi)}]\bigr)\\
  &\quad- \frac{9\xi-2}{2\uppi^2} [v_1] + 4 c_1 m^4 - c_2 m^2 R - (6 c_3 + 2 c_4) \Box R,
\end{split}\end{equation}
where $R = 6 \frac{a''}{a^3} $, $\Box R = 36 \frac{a'' a^{\prime\,2}}{a^7} - 18 \frac{a^{\prime\prime\,2}}{a^6} - 24 \frac{a^{(3)} a'}{a^6} + 6 \frac{a^{(4)}}{a^5} $ and
$[v_1]$ is the conformal anomaly
\begin{equation}\begin{split}
  [v_1] &= \frac{m^4}{8} + \frac{1}{60} \left( \frac{a^{\prime\,4}}{a^8} - \frac{a'' a^{\prime\,2}}{a^7} \right) + \frac{(6\xi-1) m^2}{4} \frac{a''}{a^3} + \frac{(6\xi-1)^2}{8} \frac{a^{\prime\prime\,2}}{a^6} \\&\quad + \frac{5\xi-1}{20} \left( 6 \frac{a'' a^{\prime\,2}}{a^7} - 3 \frac{a^{\prime\prime\,2}}{a^6} - 4 \frac{a^{(3)} a'}{a^6} + \frac{a^{(4)}}{a^5} \right).
\end{split}\end{equation}
Moreover, 
\begin{equation}\label{eq:T00-FLRW}\begin{split}
  {\langle T_{00}^\mathrm{ren} \rangle} &= \frac{1}{2} [ \mathcal{G}_{\pi\pi}-\tilde{ \mathcal{H}}_{\pi\pi}] - \frac{1}{2a^2} [\Delta_r (\mathcal{G}_{\varphi\varphi}-\tilde{ \mathcal{H}}_{\varphi\varphi})] + \frac12  m^2 [\mathcal{G}_{\varphi\varphi}-\tilde{ \mathcal{H}}_{\varphi\varphi}] \\&\quad + \xi \left( \frac{G_{00}}{a^2} [\mathcal{G}_{\varphi\varphi}-\tilde{ \mathcal{H}}_{\varphi\varphi}] + 6 \frac{a'}{a} [\mathcal{G}_{(\varphi\pi)}-\tilde{ \mathcal{H}}_{(\varphi\pi)}]-6\frac{{a'}^2}{a^2}[\mathcal{G}_{\varphi\varphi}-\tilde{\mathcal{H}}_{\varphi\varphi}] \right) \\&\quad - \frac{a^2}{4\uppi^2} [v_1] - c_1 a^2 m^4 + c_2 m^2 G_{00} + (3c_3 + c_4) J_{00}.
\end{split}\end{equation}
with  $G_{00} = 3 \frac{a^{\prime\,2}}{a^2}$ and $J_{00} = -24 \frac{a'' a^{\prime\,2}}{a^5} - 6 \frac{a^{\prime\prime\,2}}{a^4} + 12 \frac{a^{(3)} a'}{a^4}$.
    
    (iv) One of the problems with the Hadamard parametrix $\tilde{\mathcal{H}}(\tau,r)$ is that it does not fulfill a well-defined set of dynamic equations. Therefore we introduce an auxiliary (non-covariant) parametrix
    \begin{equation}
        \mathcal{H}_n(\tau,r)=\left(\begin{array}{c}0\\0\\\gamma_{-1}(\tau)\end{array}\right) h_{-2}(r)+\sum_{l=0}^n\left(\begin{array}{c}\alpha_j(\tau)\\\beta_j(\tau)\\\gamma_{j}(\tau)\end{array}\right) h_{2j}(r)
    \end{equation}   
    with the homogeneous distributions $h_z(r) = \frac{e^{iz \pi/2}}{2\pi^2} \frac{r^{z-2}}{\Gamma(z)} \left( \log\Bigl(\frac{r}{\mu}\Bigr) - \psi(z) \right)$ defined for $z\in \mathbb{C}$ via analytic continuation and depending on some parameter $\mu>0$. Here, $\psi(z)$ denotes the Digamma function. Using $\Delta_r h_j(r)=h_{j-2}(r)$, we obtain the coefficient functions $\alpha_l(\tau)$, $\beta_l(\tau)$ and $\gamma_l(\tau)$ recursively by the starting condition $\gamma_{-1}=\frac12$, $\alpha_0=\frac12$ and $\beta_0=0$ and the equation 
    \begin{equation}
    \label{eq:dynModHadamard}
    \partial_\tau \mathcal{H}_n- \left(\begin{array}{ccc}0&2&0\\\Delta_r-V&0&1\\0&2(\Delta_r-V)&0\end{array}\right)\mathcal{H}_n(\tau,r)   =\mathcal{O}(r^{2(n-1)}).
    \end{equation}
    Then, we can rewrite expressions like $[\mathcal{G}_\sharp-\tilde{\mathcal{H}}_{\sharp,n}]$ as $[\mathcal{G}_\sharp-\mathcal{H}_{\sharp, n}]+[\mathcal{H}_{\sharp,n}-\tilde{\mathcal{H}}_{\sharp,n}]$, $\sharp\in \{\varphi\varphi,(\varphi\pi),\pi\pi\}$, or $[\Delta_r(\mathcal{G}_{\sharp}-\tilde{\mathcal{H}}_{\sharp,n})]$ as $[\Delta_r(\mathcal{G}_{\sharp}-\mathcal{H}_{\sharp, n})]+[\Delta_r(\mathcal{H}_{\sharp,n}-\tilde{\mathcal{H}}_{\sharp,n})]$. The second term in these sums can be evaluated explicitly in terms of the scale function $a(\tau)$ and its derivatives up to order four as long as the order $n$ is chosen larger or equal $2$.
    
    (v) We define a sequence of so-called moment functions 
    \[
    \mathcalm_{n,\sharp}=[\Delta_r^n(\mathcal{G}_{\sharp}-\mathcal{H}_{\sharp, j})],
    \]
    $\sharp\in \{\varphi\varphi,(\varphi\pi),\pi\pi\}$, and arrange these three real-valued functions of time into $\R^3$-valued functions  $\mathcalm_n=(\mathcalm_{n,\varphi\varphi},\mathcalm_{n,(\varphi\pi)},\mathcalm_{n,\pi\pi})^\top\in\R^3$, which are independent of $j$ provided that $j\geq n+1$. From \eqref{eq:dynTwopoint} and \eqref{eq:dynModHadamard} we deduce the following recursive set of equations
    \begin{equation}
    \label{eq:dynMomentRecursive}
    \partial_\tau \mathcalm_n=A\mathcalm_n+B\mathcalm_{n+1} \quad\text{with}\quad A= \left(\begin{array}{ccc}
    0 & 2 & 0 \\
    -V & 0 & 1 \\
    0 & -2V & 0
  \end{array}\right)\quad\text{and}\quad B=\left(\begin{array}{ccc}0 & 0 & 0 \\
    1 & 0 & 0 \\
    0 & 2 & 0\end{array}\right).
    \end{equation}
    Introducing sequences $\mathcalm=(\mathcalm_n)$ in weighted, discrete $L^p$-spaces $ 
  \vec\ell^p(w) = \mathbb{R}^3 \otimes \ell^p(w)$ with weights $w_n=w^{-n}$, $w>1$, we obtain the dynamical system
  \begin{equation}
      \label{eq:dynMoments}
      \partial_\tau\mathcalm =(A\otimes \one+B\otimes \mathbb{L})\mathcalm,
  \end{equation}
  where $\mathbb{L}$ is the left-shift operator on $\vec\ell^p(w)$. It has been shown that this infinite dynamical system has maximal solutions in conformal time $\tau$ for any four-times-differentiable scale function $a(\tau)$.

(vi) We consider the energy and the trace equation derived from the semiclassical Einstein equation \eqref{semiclass-einstein-eq}
\begin{equation}
	 -R=g^\munu G_\munu=\kappa\Tren
	 \hspace{1cm}\textup{and}\hspace{1cm} 
	 G_{00}=\kappa\,\Tmunurenen, 
\end{equation}
respectively. Wrapping up (i)--(v) above, one obtains 
\begin{align}
 \begin{split}\label{eq:SCEMomerntTrace}
  0 &=
  \left( -12 (3 c_3 + c_4) -\frac{1}{480\pi^2} + \frac{6\xi-1}{48\pi^2} + \frac{(6\xi-1)^2}{16\pi^2} \log(a \lambda_0) \right)
  \\
  &\hspace{5.5cm}\cdot\left( \frac{a^{(4)}}{a^5} - 4 \frac{a^{(3)} a'}{a^6} - 3 \frac{(a'')^2}{a^6} + 6 \frac{a'' (a')^2}{a^7} \right) 
  \\
  & + \frac{(6\xi-1)^2}{32\pi^2} \left( 4 \frac{a^{(3)} a'}{a^6} + 3 \frac{(a'')^2}{a^6} - 10 \frac{a'' (a')^2}{a^7} \right) + \frac{1}{240\pi^2} \left( -\frac{a'' (a)^2}{a^7} + \frac{(a')^4}{a^8} \right) 
  \\
  & + \left( \frac{6}{\kappa} + m^2 \Bigl( -6 c_2 + \frac{1}{48\pi^2} + \frac{6\xi-1}{8\pi^2} \bigl(1 + \log(a \lambda_0) \bigr) \Bigr) \right) \frac{a''}{a^3} 
  \\
  & + \frac{(6\xi-1) m^2}{16\pi^2} \frac{(a')^2}{a^4} + m^4 \left( 4 c_1 + \frac{1}{32\pi^2} + \frac{1}{8\pi^2} \log(a \lambda_0) \right) 
  \\
  &  - \frac{m^2}{a^2} \mathcalm_{\varphi\varphi,0}+ (6\xi-1) \left( \Big( 6\xi \frac{a''}{a^5} - \frac{(a')^2}{a^6} + \frac{m^2}{a^2} \Big) \mathcalm_{\varphi\varphi,0}\right. 
  \\
  &\hspace{5.5cm}\left.+ 2 \frac{a'}{a^5} \mathcalm_{(\varphi\uppi),0} - \frac{1}{a^4} \big( \mathcalm_{\uppi\uppi,0} + \mathcalm_{\varphi\varphi,1} \big) \right)
 \end{split}
\end{align}
for the trace equation and 

 \begin{align}\label{eq:SCEMomerntEnergy}
 \begin{split}
  0 &= \left( 6 (3 c_3 + c_4) + \frac{1}{960\pi^2} - \frac{6\xi-1}{96\pi^2} - \frac{(6\xi-1)^2}{32\pi^2} \log(a \lambda_0) \right)
  \\
  &\hspace{6.5cm}\cdot \left( 2 \frac{a^{(3)} a'}{a^4} - \frac{(a'')^2}{a^4} - 4 \frac{a'' (a')^2}{a^5} \right) 
  \\
  & - \frac{(6\xi-1)^2}{16\pi^2} \frac{a''(a')^2}{a^5} + \frac{1}{960\pi^2} \frac{(a')^4}{a^6} - m^4 \left( c_1 + \frac{1}{32\pi^2} \log(a \lambda_0) \right) a^2 
  \\
  &+ \left( -\frac{3}{\kappa} + m^2 \Bigl( 3 c_2 - \frac{1}{96\pi^2} - \frac{6\xi-1}{16\pi^2} \big(1 + \log(a \lambda_0) \big) \Big) \right) \frac{(a')^2}{a^2} 
  \\
  & + \frac{m^2}{2} \mathcalm_{\varphi\varphi,0} + (6\xi-1) \left( -\frac{(a')^2}{2a^4} \mathcalm_{\varphi\varphi,0} + \frac{a'}{a^3} \mathcalm_{(\varphi\uppi),0} \right)
  \\
  & \hspace{7.5cm} + \frac{1}{2a^2} \big( \mathcalm_{\uppi\uppi,0} - \mathcalm_{\varphi\varphi,1} \big)
  \end{split}
\end{align}
for the energy constraint, see \cite{siemssen-gottschalk} for the details of the calculation. We note that the respective first lines of \eqref{eq:SCEMomerntTrace} and \eqref{eq:SCEMomerntEnergy} only consist of quantum contributions of the field, that is, of terms originating in the renormalization freedom and the trace anomaly as well as expicitly state-dependent contributions (the log-terms).

While the (infinite dimensional) dynamical system from \eqref{eq:dynMoments} and \eqref{eq:SCEMomerntTrace} is well posed for any set of initial conditions $(a,a^{(1)},a^{(2)}, a^{(3)})^\top\in \R^3$ and $\mathcalm\in\vec{\ell} ^p(w)$, it is, however, not clear whether there exists a Hadamard state $\omega$ for a given set of moments $\mathcalm$. Let us therefore shortly comment on physical initial conditions from the `tow-in' technique as described in \cite{siemssen-gottschalk} that guarantees the existence of physical solutions for at least a subset of moments. For this purpose, a Hadamard state and the corresponding tower of moments are prepared on some simple space-time, e.g.\ Minkowski space-time. Then, after a short waiting time, the space-time is deformed by an auxiliary dynamical equation that `tows' the vector of initial conditions $(a(\tau),\ldots,a^{(3)}(\tau))^\top$ to some desired vector of initial conditions $(a_0,\ldots,a_3)^\top\in\R^4$. Both the Hadamard state and the tower of moments propagate forward accordingly. After the tow-in phase, an interpolation phase follows where the auxiliary dynamics of $(a(\tau),\ldots,a^{(3)}(\tau))^\top$ and $\mathcalm(\tau)$ is quickly interpolated to the dynamics of the SCE given by \eqref{eq:dynModHadamard} and \eqref{eq:SCEMomerntTrace}. Thereafter the system follows this dynamic. It can be shown that the latter can be done in a way that (a) the energy constraint \eqref{eq:SCEMomerntEnergy} and thereby the full SCE is fulfilled and (b) the initial conditions with respect to the dynamics $a(t)$ lie in an $\varepsilon$-neighbourhood to $(a_0,\ldots,a_{(3)})^\top$ for arbitrarily small $\varepsilon>0$. For the details, we refer to \cite[Thm. 5.11]{siemssen-gottschalk}.  
 
 As the last statement of this preparatory section, we present the tower of moments for the Minkowski state with scale factor $a(\tau)=1$. As computed in \cite[(4.8)]{siemssen-gottschalk}, the moments $\mathcalm$ in this case are given by
 \begin{align}\label{eq:Moments-vacuum}
 \begin{split}
  \mathcalm_{\varphi\varphi,n} &= \frac{1}{2\uppi^2} \bigl(\tfrac12 m\bigr)^{2n+2} \Bigl( \log(\tfrac12 m \mu) + \psi(2n+2) - \tfrac12 \bigl( \psi(n+1) + \psi(n+2) \bigr)\Bigr) \binom{2n+1}{n+1}, \\
  \mathcalm_{(\varphi\pi),n} &= 0, \\
  \mathcalm_{\pi\pi,n} &= \frac{1}{\uppi^2} \bigl(\tfrac12 m\bigr)^{2n+4} \Bigl( \log(\tfrac12 m \mu) + \psi(2n+2) - \tfrac12 \bigl( \psi(n+1) + \psi(n+3) \bigr)\Bigr) \frac{(2n+1)!}{n!(n+2)!}.
\end{split}
\end{align}
It has been shown that $\mathcalm\in\vec{\ell}^p(w)$ for sufficiently large weights $w$.




\section{The cSCE with zero mass and Minkowski-vacuum-like states as a dynamical system \label{sec:Decoupling}}

Two observations in the dynamics of the moments in \eqref{eq:dynMoments} and the formula for the moments of the Minkowski vacuum state in \eqref{eq:Moments-vacuum} are remarkable: At first, \eqref{eq:dynMoments} is a linear homogeneous differential equation. At second, the moments for the Minkowski vacuum state vanish for $m=0$, i.e.\ $\mathcalm=0$ is fulfilled at the initial point of the `tow-in' process, and hence $\mathcalm(\tau)=0$ holds on the entire cosmological space-time with expansion history $a(\tau)$ that partially consists of the tow-in phase and partially of the SCE phase. Thus, the quantum state completely decouples from the dynamics of the space-time. In this case, all terms in \eqref{eq:SCEMomerntTrace} and \eqref{eq:SCEMomerntEnergy} that are proportional to $m^2$, $m^4$ and $\mathcalm$ are eliminated which largely simplifies our equations. Additional justification that this procedure actually results in physical solutions is given in Theorem 1 below. Finally, one obtains a fourth-order ODE for the scale factor $a(\tau)$ together with a third-order constraint.

Furthermore, since we are interested in cosmology including solutions with a Big Bang, we re-express the dynamic equations for $a(\tau)$ given in conformal time $\tau$ in cosmological time $t$. This is done in order to deal with Big Bang-solutions, as in some cases a Big Bang-event $a(t)=0$ is shifted to conformal time $\tau=-\infty$, see also Subsection \ref{sec:Horizon}. 

Formally, we substitute $\frac{d}{d\tau}=a(t) \frac{d}{d t}$ into the trace equation\footnote{Here we employ the convention that an expression $a^{(k)}$ in an equation with dot-derivatives denotes the $k$-th derivative w.r.t.\ cosmological time whereas in an equation with prime derivatives the same expression stands tor a $k$-th conformal derivative. The same convention applies to initial conditions $(a_0,\ldots,a_3)$.  } and obtain
\begin{align}
\begin{aligned}
    0 &= \left(k_2\log(\lambda_0\,a)-k_1\right)
                 \left(\frac{a^{(4)}}{a} + 3\frac{\dot{a}a^{(3)}}{a^2} + \frac{\ddot{a}^2}{a^2} - 5\frac{\dot{a}^2\ddot{a}}{a^3}\right)\\
       &\hspace{1cm}    + \frac{k_2}{2}\left(4\frac{\dot{a}a^{(3)}}{a^2} + 3\frac{\ddot{a}^2}{a^2} + 12\frac{\dot{a}^2\ddot{a}}{a^3} - 3\frac{\dot{a}^4}{a^4}\right)
       - k_3\frac{\dot{a}^2\ddot{a}}{a^3}+k_4\left(\frac{\dot{a}^2}{a^2}+\frac{\ddot{a}}{a}\right),
\end{aligned}\label{eq:dyn-sys-trace-eq-decoupled}
\end{align}
where we use the dot as a symbol of derivatives w.r.t.\ cosmological time and for the ease of notation we introduced
\begin{align}\label{eq:k-constants}
\begin{split}
	k_1&=12(3c_3+c4) + \frac{1}{480\pi^2} - \frac{6\xi - 1}{48\pi^2},\\
    k_2&=\frac{(6\xi-1)^2}{16\pi^2}\ge0,\hspace{1cm}
	k_3=\frac{1}{240\pi^2}>0,\hspace{1cm}
	k_4=\frac{6}{\kappa}>0.
\end{split}
\end{align}
The parameters $k_1,k_2,k_3$ are dimensionless, but a numerical value of $k_4$ depends on the chosen unit system. Moreover, as noted before, $k_1$ consists only of quantum contributions. The energy constraint in the present setting reads
\begin{equation}
\begin{aligned}
    0 &= -\left(k_2\log(\lambda_0\,a)-k_1\right)
                 \left(\dot{a}a^{(3)}-\frac{1}{2}\ddot{a}^2+\frac{\dot{a}^2\ddot{a}}{a}-\frac{3}{2}\frac{\dot{a}^4}{a^2}\right)\\
                 &\hspace{1cm}-k_2\left(\frac{\dot{a}^2\ddot{a}}{a}+\frac{\dot{a}^4}{a^2}\right)+ \frac{k_3}{4}\frac{\dot{a}^4}{a^2}-\frac{k_4}{2}\dot{a}^2.
\end{aligned}\label{eq:dyn-sys-en-eq-decoupled}
\end{equation}

Let us next reconsider the `tow-in' procedure for the proof of the existence of a physical Hadamard state corresponding to some dynamics of moments $\mathcalm(\tau)$, for the special case that we start the tow-in process with $\mathcalm=0$ and hence obtain $\mathcalm(\tau)=0$. For that scenario, we can refine the results of \cite{siemssen-gottschalk} on the existence of physically meaningful solutions to the cSCE in an arbitrarily small neighborhood of the initial conditions  $(a_0,\ldots,a_3)^\top$ for $a(t)$ at $t=0$ and its first to third derivatives. In the present context, we modify the tow-in argument and prove that any set of initial conditions $(a_0,\ldots,a_3)^\top$ with $k_2\log(\lambda_0 a_0)-k_1\neq0$ can be matched \emph{exactly}:

\begin{theorem}
\label{thm:initCond}
Let $(a_0,\ldots,a_3)^\top\in (0,\infty)\times \R^3$ be initial values for $a(t)$ at cosmological time $t=0$ such that $(k_2\log(\lambda_0 a_0)-k_1)\not=0$. Then the following holds:
\begin{itemize}
    \item[\textup{(i)}] There exists $a(t)$, a unique solution to the ODE \eqref{eq:dyn-sys-trace-eq-decoupled} on some interval of time $(t_{\textup{i}},t_{\textup{f}})$, $t_{\textup{i}}\in[-\infty,0)$, $t_{\textup{f}}\in(0,\infty]$ such that $a(t)$ fulfills \eqref{eq:dyn-sys-trace-eq-decoupled} with the given initial conditions.
    \item[\textup{(ii)}] If the initial conditions fulfill the energy constraint \eqref{eq:dyn-sys-en-eq-decoupled} at $t=0$, then \eqref{eq:dyn-sys-en-eq-decoupled} is fulfilled for all times.
    \item[\textup{(iii)}] There exists a Hadamard state on the cosmological space-time defined by $a(t)$, $t\in(t_{\textup{i}},t_{\textup{f}})$ for the massless Klein-Gordon field with associated tower of moments fulfilling $\mathcalm(t)=0$.
\end{itemize}
 Hence, any cosmological space-time defined by $a(t)$ for $t\in(t_{\textup{i}},t_{\textup{f}})$  as described in \textup{(i)} and \textup{(ii)} is a solution to the cSCE for a Hadamard state as in \textup{(iii)}.
\end{theorem}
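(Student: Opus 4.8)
The plan is to establish the three items in turn and then assemble them, observing that (i) is a statement about a classical ODE, (ii) is a constraint-propagation argument, and (iii) constructs the quantum state on the background produced by (i)--(ii). For (i), I would solve \eqref{eq:dyn-sys-trace-eq-decoupled} for the highest derivative: the coefficient multiplying $a^{(4)}/a$ is exactly $k_2\log(\lambda_0 a)-k_1$, which by hypothesis is nonzero at $t=0$ and hence, using $a_0>0$ and continuity, on an open neighbourhood of the initial data. On the open set $\Omega=\{(a,\dot a,\ddot a,a^{(3)}):a>0,\ k_2\log(\lambda_0 a)-k_1\neq0\}$ one may therefore write $a^{(4)}=F(a,\dot a,\ddot a,a^{(3)})$ with $F$ smooth, the nonlinearities being rational in $a$ and its derivatives together with the analytic $\log$-term. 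Recasting the equation as a first-order system for $(a,\dot a,\ddot a,a^{(3)})$ and invoking Picard--Lindel\"of yields a unique local solution that extends to a maximal interval $(t_{\textup{i}},t_{\textup{f}})$ on which the trajectory remains in $\Omega$; the non-degeneracy hypothesis is precisely what keeps the equation non-singular.

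For (ii), I would argue by constraint propagation. Writing $E(t)$ for the right-hand side of \eqref{eq:dyn-sys-en-eq-decoupled}, which depends on $a$ up to its third derivative, the time derivative $\dot E$ involves $a^{(4)}$; substituting $a^{(4)}=F$ from (i), I expect $\dot E$ to collapse to the form $\dot E=g(t)\,E$ for a function $g$ built from $a$ and its lower derivatives, continuous on $(t_{\textup{i}},t_{\textup{f}})$ since the trajectory stays in $\Omega$. The conceptual reason this cancellation must occur is the contracted Bianchi identity $\nabla^\mu G_{\mu\nu}=0$ together with the covariant conservation $\nabla^\mu\langle T^{\mathrm{ren}}_{\mu\nu}\rangle=0$ built into Hadamard renormalization: in the FLRW reduction the $\nu=0$ component of conservation is exactly the continuity equation relating $\dot E$ to the trace equation. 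Once $\dot E=g(t)\,E$ is in hand, $E(0)=0$ forces $E\equiv0$ by uniqueness for the scalar linear ODE. I would perform this as a direct polynomial computation rather than appeal to the covariant identity, since the explicit structure makes the cancellation checkable.

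For (iii), the background $a(t)$ from (i)--(ii) is now fixed and I would exhibit the state via the tow-in technique of \cite{siemssen-gottschalk}. First, for $m=0$ the Minkowski-vacuum moments \eqref{eq:Moments-vacuum} vanish identically, each entry carrying a positive power of $m$. I would then extend the given space-time to the past by a Minkowski segment $a\equiv1$ carrying the vacuum state (zero moments), interpolating smoothly through a tow-in/interpolation region to the data $(a_0,\dots,a_3)$ at $t=0$. Because the moment dynamics \eqref{eq:dynMoments} is linear and homogeneous, vanishing initial moments propagate to $\mathcalm(\tau)\equiv0$ along the entire expansion history, in particular on the region governed by the solution from (i). The evolved state is Hadamard since the Hadamard property is preserved under Klein--Gordon time evolution. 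The improvement over \cite[Thm.~5.11]{siemssen-gottschalk} --- exact rather than $\varepsilon$-approximate matching of the initial data --- is available precisely because of the decoupling: the geometry is solved first as the classical ODE of (i), and the tow-in serves only to produce a zero-moment Hadamard state on that fixed background, with no self-consistency loop between state and geometry left to obstruct exact matching.

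To conclude, I would set $m=0$ and $\mathcalm\equiv0$ in the full equations \eqref{eq:SCEMomerntTrace} and \eqref{eq:SCEMomerntEnergy}: every term proportional to $m^2$, $m^4$ or to the moments drops out, leaving exactly \eqref{eq:dyn-sys-trace-eq-decoupled} and \eqref{eq:dyn-sys-en-eq-decoupled} after the substitution $d/d\tau=a\,d/dt$, and these hold by (i) and (ii). Since on flat FLRW space-time the off-diagonal components of $G_{\mu\nu}$ and $\langle T^{\mathrm{ren}}_{\mu\nu}\rangle$ vanish and the tensor is fixed by its $00$-component and trace, satisfying the energy and trace equations is equivalent to the full semiclassical Einstein equation \eqref{semiclass-einstein-eq}. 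I expect the main obstacle to be (ii): verifying that $\dot E$ is genuinely proportional to $E$ after substituting the trace equation is a lengthy computation in which one must ensure that no spurious terms survive, in particular those arising from the explicit $\log(\lambda_0 a)$ dependence of the coefficients; the construction in (iii) is conceptually subtler but reduces to correctly invoking the established tow-in machinery once $\mathcalm\equiv0$ is secured.
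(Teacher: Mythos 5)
Your proposal is correct and follows essentially the same route as the paper's proof: (i) solving \eqref{eq:dyn-sys-trace-eq-decoupled} for $a^{(4)}$ on the set where $k_2\log(\lambda_0 a)-k_1\neq0$ and invoking standard ODE theory, (ii) treating the energy constraint as a constant of motion of the trace equation (the paper cites exactly the FLRW/Bianchi-type identity \eqref{eq:EnergyContraint} that underlies your $\dot E=g(t)\,E$ computation), and (iii) the tow-in construction gluing a past Minkowski segment carrying the vacuum state, with the linear homogeneous moment dynamics \eqref{eq:dynMoments} propagating $\mathcalm\equiv0$ and the Hadamard property preserved under Klein--Gordon evolution, the decoupling being precisely what upgrades the $\varepsilon$-approximate matching of \cite[Thm.~5.11]{siemssen-gottschalk} to exact matching. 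The only cosmetic difference is that the paper makes explicit the backward propagation of the towed state onto all of $(t_{\textup{i}},t_{\textup{f}})$ (since $a_{\text{tow}}$ coincides with $a$ only after the switching region), a step your appeal to preservation of the Hadamard property under time evolution covers implicitly.
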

\begin{proof}
Note that by the assumption $(k_2\log(\lambda_0 a_0)-k_1)\not=0$ equation \eqref{eq:dyn-sys-trace-eq-decoupled} can be brought to the form 
$$
a^{(4)}=f(a,\dot a,\ddot a, a^{(3)})
$$
where $f(\cdot)$ is locally Lipschitz except for $a=0$ and $(k_2\log(\lambda_0 a(t))-k_1)=0$. Therefore, assertion (i) follows from standard theory of ODE, see e.g.\ \cite{agarwal2008introduction}.

 Since the energy equation is a constant of motion for the trace equation, statement (ii), is well known, see, e.g. eq.\ \eqref{eq:EnergyContraint} below.

To prove (iii), we modify the `tow-in' argument from Section \ref{sec:Short-derivation-of-sce} in the following way: let $(t_{\textup{i}},t_{\textup{f}})$ be an interval containing $t=0$ such that the solution $a(t)$ from (i) is given. 

Consider the switching function $\chi\in C^\infty(\R,[0,1])$ with the property $\chi(t)=0$ for $t<\frac{3}{4}t_{\textup{i}}$ and $\chi(t)=1$ for $t>\frac{1}{4}t_{\textup{i}}$. Moreover, let the cosmological space-time be defined by the smooth scale factor
$$
a_\text{tow}(t)=\chi(t)a(t)+(1-\chi(t)),$$
which is Minkowski for $t\leq\frac{3}{4}t_{\textup{i}}$. 

Thus we can consider $\omega_\text{vac}$, the Minkowski vacuum state for the massless free field for values $t<\frac{3}{4}t_{\textup{f}}$, which is propagated forward to a state $\omega_\text{tow}$ on the entire (globally hyperbolic) space-time defined by $a_\text{tow}$ via the massless Klein-Gordon dynamics. As the Minkowski vacuum state $\omega_\text{vac}$ is Hadamard and $a_\text{tow}(t)$ is smooth, so is the propagated state $\omega_\text{tow}$ \cite{siemssen-gottschalk}. 

Furthermore, the tower of moments $\mathcalm_\text{vac}(t)$ associated to $\omega_\text{vac}$ for $t<\frac{3}{4}t_{\textup{i}}$ fulfills $\mathcalm_\text{vac}(t)=0$ (cf.\ \eqref{eq:Moments-vacuum} with $m=0$) and therefore, by \eqref{eq:dynMoments}, the tower of moments associated with $\omega_\text{tow}$ satisfies $\mathcalm_\text{tow}(t)=0$, also for $t\in(t_{\textup{i}},t_{\textup{f}})$. By this circumstance, the cSCE holds on $(\frac{1}{4}t_{\textup{i}},t_{\textup{f}})$. Lastly, if the state is defined on this interval of time, it can be propagated backwards to a state $\omega$ on the (also globally hyperbolic) space-time defined by $a(t)$, $t\in(t_{\textup{i}},t_{\textup{f}})$ which, for the same reasons as above, results is a Hadamard state on this cosmological space-time. Here again the associated tower of moments fulfills $\mathcalm(t)=0$, for $t\in (t_{\textup{i}},t_{\textup{f}})$ as $\mathcalm(t)=\mathcalm_\text{tow}(t)=0$ for $t>\frac{1}{4}t_{\textup{i}}$ and \eqref{eq:dynMoments}. Thus $\omega$, $a(t)$ and $\mathcalm(t)$ satisfy the cSCE.  This proves the third assertion.       
\end{proof}

Let us shortly compare Theorem \ref{thm:initCond} to the well-known decoupling result for massless conformal fields by Starobinski \cite{Starobinski}. In the case of conformal coupling, Starobinski's result is more general, as \emph{every} state decouples from the cSCE, up to the conformal anomaly term. Our result is restricted to a special class of towed-in massless Minkowski vacuum states, exclusively. On the other hand, our result is more general as conformal coupling $\xi=\frac{1}{6}$ is not required.

A last remark in the present section concerns the role of the regularization parameter $\lambda_0$. Since it is only used to construct the auxiliary parametrices, it does not bear any physical meaning. Nevertheless, different values for $\lambda_0$ lead to different solutions $a(t)$. However, note that $\omega_\text{vac}$ has to be towed in via the $\lambda_0$-dependent space-time defined by $a_\text{tow}(t)$ and therefore also the state $\omega_\text{tow}$ on $(\frac{3}{4}t_{\textup{i}},t_{\textup{f}})$ implicitly depends on $\lambda_0$. 

\section{General discussion on the decoupled cSCE}
\label{sec:General_discussion_on_the_decoupled_cSCE}
One can easily see that the trace equation \eqref{eq:dyn-sys-trace-eq-decoupled}, with the energy equation \eqref{eq:dyn-sys-en-eq-decoupled} regarded as an algebraic constraint (particularly, on the initial values), and the energy equation, regarded as an ODE in its own right, are equivalent under the assumption $\dot{a}\neq 0$. This observation can be traced back to a property of the Einstein tensor's components in FLRW space-time, namely that
\begin{equation}
\label{eq:EnergyContraint}
	\ddt\big(a^2\, G_{00}\big)+2a\dot{a}G_{00}=-a\dot{a}~g^\munu G_\munu,    
\end{equation}
and thus, imposed by the SCE \eqref{semiclass-einstein-eq}, the analog equation holds for $\Tmunuren$ as well. However, due to the latter restriction $\dot{a}=0$ and in order to avoid numerical difficulties close to the $\dot{a}=0$\,-regime, we prefer to work with the trace equation \eqref{eq:dyn-sys-trace-eq-decoupled}. By the aforementioned equivalence, we then conclude that choosing suitable initial conditions to fulfill \eqref{eq:dyn-sys-en-eq-decoupled} results in solutions which satisfy \eqref{eq:dyn-sys-en-eq-decoupled} for all times.

Some exact solutions can be found by the ansatz $a(t)=\exp(H t)$. Inserting it into either the trace equation or the energy constraint, we obtain a fourth order polynomial equation for $H$ solved by 
\begin{equation}
\label{eq:deSitterSolutions}
	H=0\hspace{1cm}\textup{or by}\hspace{1cm}H=\pm H^{\textup{dS}}:=\pm\sqrt{\tfrac{2k_4}{k_3-8k_2}}.    
\end{equation}
Obviously, $H=0$ stands for the Minkowski solution while $\pm H^\textup{dS}$ are expanding/shrinking de Sitter solutions with constant Hubble parameter $H=\frac{\dot{a}(t)}{a(t) }$. Note that $H^\textup{dS}$ is a real number if and only if $k_2<\frac{k_3}{8}$, or equivalently, $|\xi-\frac{1}{6}|<\sqrt{\nicefrac{1}{4320}}$. The symmetric occurrence of expanding and shrinking de Sitter solutions is a consequence of the time reflection invariance $t\to -t$ of the cSCE which can be easily read off from the decoupled equations and which will be furtherly exploited below.

\begin{figure}
    \centering
    \begin{minipage}{.4\textwidth}
    \includegraphics{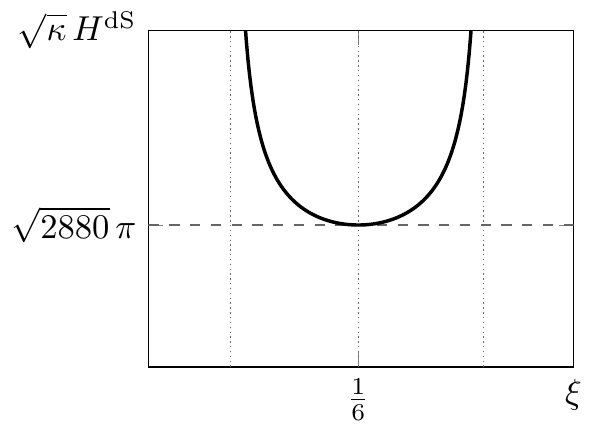}
    \end{minipage}\quad
    \begin{minipage}{.55\textwidth}
    \captionsetup{format=plain, labelfont=bf}
    \caption{\hspace{2pt}The (constant) Hubble rate $H^\textup{dS}$ of the de Sitter solutions defined in \eqref{eq:deSitterSolutions}, shown as a function of $\xi$. The vertical axis is rescaled by $\sqrt{\kappa}$ as $H^\textup{dS}$ is proportional to this value and the graphic depends on no other parameter. The dotted vertical lines show the distinguished values $\xi\in\{\frac{1}{6},\frac{1}{6}\pm\sqrt{\nicefrac{1}{4320}}\}$. We see the analog graphic to the massless, $\Lambda=0$-case of \cite{bd_on_ds}.\label{fig:HdS_of_xi}} 
    
    ~
    \end{minipage}
    
\end{figure}

We note that the de Sitter solutions found here are not necessarily identical to those discussed in \cite{JUAREZAUBRY}, as the `tow-in' states we consider here are constructed very differently from the Bunch-Davies state for the massless field. A complete list of de Sitter solutions based on Bunch-Davis states (massless and massive) is given in \cite{bd_on_ds}. 

\begin{remark}
\normalfont
Note the similarity of Figure \textup{\ref{fig:HdS_of_xi}} with a particular graphic in \textup{\cite{bd_on_ds}}, namely the one showing the de Sitter Hubble rate $H$ as a function of $\xi$ in the massless $\Lambda=0$-case. Despite the different choice of state in \textup{\cite{bd_on_ds}} a different state was chosen, the polynomial equation to be solved for $H$ is very similar. Particularly, the same analysis as in \textup{\cite{bd_on_ds}} may be performed for our `tow-in'-states, also with $\Lambda\neq0$, and we would analogously observe a quantum branch and a $($semi-$)$ classical branch of de Sitter solutions for $\Lambda>0$.
\end{remark}

The problem further simplifies if $\xi=\frac{1}{6}$ and $3c_3+c_4=-\frac{1}{5760\pi^2}$, or $k_1=k_2=0$, respectively, as e.g.\ considered by Starobinski \cite{Starobinski}. In this setting it suffices to take into account the energy constraint. Then, the latter reads as
\begin{equation}
   0=\frac{k_3}{4}\,\frac{\dot{a}^4}{a^2}-\frac{k_4}{2}\dot{a}^2 
\end{equation}
and is solved by either $\dot{a}(t)=0$ or by $\dot{a}(t)=H\,a(t)$ with $H=\pm\sqrt{\frac{2k_4}{k_3}}=\pm H^\textup{dS}$. 
In this scenario, the Minkowski and de Sitter solutions thus are the only ones.

In the general case, we solve the decoupled cSCE numerically. This requires the specification of initial conditions and insight into the dependency on the parameters $k_1,\ldots,k_4$. Let us start with a discussion of the initial conditions, a parameter study will be done in the subsequent section. 

At first, we note that the set of solutions of both our equations are invariant under transformations of the form 
\begin{equation}
    a(t)~\mapsto~\beta_1\,a(\beta_2\,t+\beta_3)\label{eq:rescaling-invariance}
\end{equation}
($\beta_1,\beta_2,\beta_3\in\R,~\beta_2\neq0$), at least with simultaneous redefinitions 
\[
k_4\mapsto\beta_2^2\,k_4\qquad\textup{and}\qquad k_1\mapsto k_1-k_2\log(\beta_1).
\]
Particularly, for a full study of initial conditions and parameters we can fix our initial time to be zero at the present time and our initial value of the scale factor of the present universe to $a(0)=a_0=1$.

A physical initial value for $\dot{a}(t)$ is the present day Hubble constant $H_0$, which is $2.2 \times 10^{-18}$ sec${}^{-1}$ in SI-units or $1.19\times 10^{-61}$ in Planck units \cite{pdg-data}. However, by the invariance of our equations under \eqref{eq:rescaling-invariance} this value is rather arbitrary and should be viewed as a physically realistic choice.

For the initial value of $\ddot{a}$ we introduce the deceleration parameter
\begin{equation}
	q_0=-\frac{a\ddot{a}}{\dot{a}^2}\label{deceleration-parameter}\, ,
\end{equation}
which is an invariant quantity under the transformations \eqref{eq:rescaling-invariance}. For any given pair $a(0)$ and $\dot{a}(0)\neq0$ the deceleration parameter $q$ sets the initial conditions for $\ddot{a}(0)$. In our numerical studies we mostly use $q_0=-0.538$ from $\Lambda$CDM cosmology \cite{pdg-data} (cf.\ also the discussion below). However, we emphasize that we also view this value merely as a physically realistic choice.

Finally, as mentioned before, for a given triple $(a(0),\dot a(0),\ddot a (0))^\top=(a_0,a_1,a_2)^\top$ we solve \eqref{eq:dyn-sys-en-eq-decoupled} for a consistent value of $a_3=a^{(3)}(0)$. Unless $\dot{a}(0)=0$ the solution for $a_3$ is unique. 

As, in the end, we want to compare our equation's solutions to the $\Lambda$CDM model, we want to shortly (and partially) discuss its derivation. Mainly, this model is based on certain observation on special solutions to the Friedmann equations. These, in turn, are derived from the Einstein equation $G _\munu=8\pi G~ T_\munu$ with the assumption of a cosmological metric \eqref{eq:cosmological-metric}. Moreover, one imposes  the stress-energy tensor to be of the same homogeneity and isotropy type as the metric, that is, to take the form of a so-called perfect-fluid stress-energy tensor 
\begin{equation}
	({T^\mu}_\nu)=\textup{diag}(-\varrho,p,p,p)\label{perfectfluidSET}
\end{equation}
with functions $\varrho$ and $p$, called the energy density and the pressure, respectively. The resulting equations bear special solutions, namely by imposing the state equation $p=\gamma\cdot\varrho$ we obtain

\medskip
\begin{equation}
\text{\begin{tabular}{llll}
     $\circ$&the radiation solution & $a(t)\propto (t-t_\textup{BB})^{\nicefrac{1}{2}}$&with $\gamma=\tfrac{1}{3}$,\\[2pt]
     $\circ$&the dust solution & $a(t)\propto (t-t_\textup{BB})^{\nicefrac{2}{3}}$&with $\gamma=0$ \quad and the\hspace{2cm}~\\[2pt]
     $\circ$&the Dark Energy solution & $a(t)\propto\exp(Ht)$ &with $\gamma=-1$
\end{tabular}}
\label{eq:Friedmann-solutions}
\end{equation} 

\medskip
\noindent (for some Big Bang times $t_\textup{BB}$ and some Hubble rate $H$). For these three classes of solutions we, moreover, observe that $\varrho\propto\frac{1}{a^4},~\varrho\propto\frac{1}{a^3}$ and $\varrho=\textup{const.}$, respectively- Finally, the $\Lambda$CDM model is obtained by assuming $\varrho$ to be a superposition of these three types of energy content. Formally, we make the ansatz $\varrho=\varrho_0\big(\frac{\raisebox{1pt}{$\scriptstyle\Omega_\textup{rad}$}}{a^4}+\frac{\raisebox{1pt}{$\scriptstyle\Omega_\textup{dust}$}}{a^3}+\Omega_\textup{DE}\big)$ and obtain the equation
\begin{equation}
H^2=H_0^2\Big(~\frac{\Omega_\textup{rad}}{a^4}+\frac{\Omega_\textup{dust}}{a^3}+\Omega_{\raisebox{-1pt}{\scriptsize DE}}\Big)\label{lcdm_equation}
\end{equation}
as a cosmological model, where $H=H(t)=\frac{\dot{a}(t)}{a(t)}$ is the Hubble rate at time $t$. Hereby, $\varrho_0,~H_0,~\Omega_\textup{rad},~\Omega_\textup{dust}$ and $\Omega_{\raisebox{-1pt}{\scriptsize DE}}$ are some (not necessarily independent) parameters of the model, in particular, the latter three fulfill $\Omega_\textup{rad}+\Omega_\textup{dust}+\Omega_{\raisebox{-1pt}{\scriptsize DE}}=1$. Whenever we speak of `standard values' for these parameters we mean the values 
\[
    \Omega_\textup{rad}=5.38\cdot10^{-5},\quad\Omega_\textup{dust}=0.315,\quad\Omega_\textup{DE}=0.685
\]
and $H_0$ as above, taken from \cite{pdg-data}. Note that these values are subject to measurement uncertainty. We denote the resulting solution by $\alcdm$. 

However, motivated by the $\Lambda$CDM model's derivation we want to introduce another quantity which will frequently find use in our later discussions. Define
\[
	\Gamma[\,a\,](t):=-\frac{1}{3}\Big(2\frac{a(t)\,\ddot{a}(t)}{\dot{a}(t)^2}+1\Big)=\tfrac{2}{3}\,q[\,a\,](t)-\tfrac{1}{3},
\]
for sufficiently nice (particularly with $\dot{a}\neq0$) scale factors $a:I\to(0,\infty)$ on some interval $I$. In the last equality we used the notation $q[\,a\,]=-\frac{a\ddot{a}}{\dot{a}^2}$, obviously inspired by \eqref{deceleration-parameter}. Moreover, we note that by choosing the above parameter values for $\Omega_\textup{rad},~\Omega_\textup{dust}$ and $\Omega_{\raisebox{-1pt}{\scriptsize DE}}$ as well as the $\Lambda$CDM equation \eqref{lcdm_equation}, one can reproduce the value $q[a_{\Lambda\textup{CDM}}](0)=-0.538$ we have introduced above.

Note that $\Gamma$ has an interesting physical content. For the solutions of the Friedmann equations mentioned above, $\Gamma$ reproduces the corresponding values of $\gamma$ and, conversely, if we read the conditions $\Gamma[\,a\,]\in\big\{-1,0,\frac{1}{3}\big\}$ as ODE's in their own right, we reproduce the corresponding Friedmann solutions form \eqref{eq:Friedmann-solutions} and only these. Observing the existence of two more solutions to the $\Lambda$CDM model, namely
\[
    a(t)\propto \sinh(\beta t)^{\nicefrac{1}{2}}\qquad\textup{and}\qquad a(t)\propto \sinh(\beta t)^{\nicefrac{2}{3}}
\]
for $\Omega_\textup{dust}=0$ and for $\Omega_\textup{rad}=0$, respectively, we observe that these solutions interpolate between a radiation- or dust-like behavior at early times and a Dark Energy-like behavior at late times. $\Gamma$ in these cases reads as
\[
    \Gamma\big[\sinh(\beta t)^{\nicefrac{1}{2}}\big]=\tfrac{1}{3}-\tfrac{4}{3}\tanh(\beta t)^2\qquad\textup{and}\qquad \Gamma\big[\sinh(\beta t)^{\nicefrac{2}{3}}\big]=-\tanh(\beta t)^2,
\]
respectively, and thus, physically spoken, $\Gamma$ shows how much a given universe is radiation/dust dominated or Dark Energy dominated at a certain phase.

As a final comment on $\Gamma$, note that for any stress-energy tensor of the shape \eqref{perfectfluidSET} the corresponding (classical or semiclassical) Einstein equation immediately implies\footnote{As we defined $\Gamma$, it is nothing but the fraction of the Einstein tensor's respective diagonal entries for a FLRW metric.} that $\Gamma=\frac{p}{\varrho}$. Particularly, also the solutions of our trace equation \eqref{eq:dyn-sys-trace-eq-decoupled} may be assigned with an energy content of the Friedmann solutions' types, allowing a physical interpretation.

For our numerical simulations we want to exploit the invariance under \eqref{eq:rescaling-invariance}. To avoid numerical instability, we rescale with $\beta_2=\frac{1}{H_0}$ (and, correspondingly, redefine $k_4$) and end up with the initial value $\dot{a}(0)=a_1=1$ in the new time scale. $q_0$ is not affected by our rescaling and $a^{(3)}$ is still computed by the energy equation, now with $k_4$ in the new time scale. $\lambda_0$ is usually set to 1, as a different value may be absorbed into the renormalization freedom. We employ the standard stiff\footnote{We are particularly interested in Big Bang solutions.} equation solver $\texttt{ode15s}$ of the \textsc{Matlab}\textsuperscript{\textregistered} R2020a release\footnote{A comparison to other solvers showed little to no deviation between solutions, with deviations decreasing as the solvers' accuracies were increased.}. { Note that the numerical solver does not integrate into the $a=0$-singularity of \eqref{eq:dyn-sys-trace-eq-decoupled} in a strict sense, but stops at $a$-values $\approx 10^{-7}$ to $10^{-9}$. Particularly, we do not make any claims on Planck scale cosmology, where the validity of the SCE is expected to break down due to quantum effects of the space-time itself.}

\begin{figure}
\centering
\begin{subfigure}{.52\textwidth}
\centering
\includegraphics[scale=1]{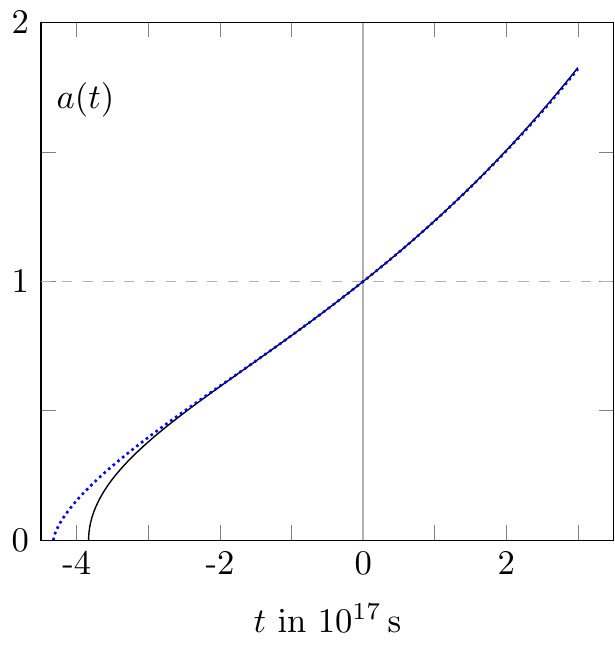}
\end{subfigure}
\begin{subfigure}{.46\textwidth}
\centering
\includegraphics[scale=1]{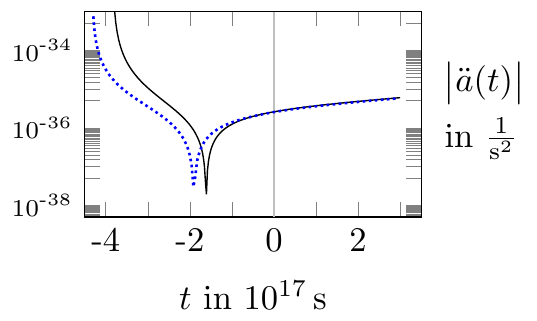}
\\
\,\includegraphics[scale=1]{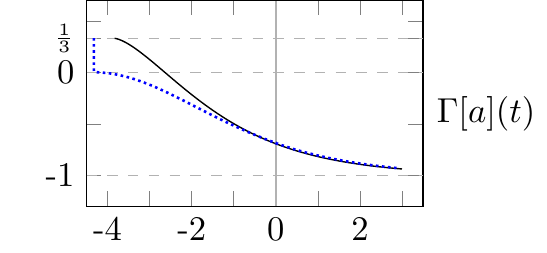}

\vspace{.65cm}
\end{subfigure}
\begin{minipage}{.9\textwidth}
\captionsetup{format=plain, labelfont=bf}
\caption{Solution of the cSCE (black), compared to the best fit (i.e.\ using numeric values from \cite{pdg-data}, cited in the text) $\Lambda$CDM model (dotted blue)\label{firstsolution}}
\end{minipage}
\end{figure}

A typical solution with the parameters

\vspace{.2cm}
\begin{tabular}{p{1.7cm}p{4cm}p{2.1cm}p{2.1cm}}
\multicolumn{2}{p{4.7cm}}{$\circ$~ $\dot{a}(0)=H_0=2.2\cdot10^{-18}\frac{1}{\textup{s}}$}&\multicolumn{2}{l}{$\circ$~ $\ddot{a}(0)$ given by $q_0=q_{0,\Lambda\textup{CDM}}=-0.538$}\\[5pt]
$\circ$~ $\xi=\frac{1}{12}$&$\circ$~ $3c_3+c_4=0.5$&$\circ$~ $\lambda_0=1$&$\circ$~ $\kappa=2\cdot10^{42}$
\end{tabular}
\vspace{.2cm}\\
is shown in Figure \ref{firstsolution}. It is typical in the sense that its behavior as a function of time is generic for a certain range of parameters that has been identified manually in order to retrieve promising cosmological models. The most remarkable of these properties are an exponential late time expansion as well as a `square-root-like zero' at early times. In other words, we indeed observe a solution which admits a Big Bang and immediately after this Big Bang the universe expands asymptotically as $a(t)\propto(t-t_\textup{BB})^{\nicefrac{1}{2}}$. To underpin this notion of `square-root-like', we have included a plot of the solution $a$ of Figure \ref{firstsolution} $-$ together with its first two derivatives $-$ in Figure \ref{firstsolutionloglog}. The horizontal axes in Figure \ref{firstsolutionloglog} are now shifted to $t-t_\textup{BB}$ (with a numerically obtained value $t_\textup{BB}$), allowing a log-log-scaling. The red dotted lines show the analog curves for a pure square-root expansion with the same Big Bang time and, particularly, how the latter fits our solution over several magnitudes. Moreover, computing the Ricci scalar curvature for the metric \eqref{eq:cosmological-metric}, that is $R=6\big(\frac{\ddot{a}}{a}+\frac{\dot{a}^2}{a^2}\big)$, we find that our Big Bang is indeed a singularity in the sense that $R\to\infty$ as $t\to t_\textup{BB}$. 

In terms of the quantity $\Gamma$, the above observations can be  interpreted as a radiation dominated early phase and a Dark Energy dominated late time expansion. The former does match the physical expectation that a massless scalar field should behave like radiation, and the latter does again indicate an effect of Dark Energy, although we did not include a cosmological constant to our model. Note that one cannot easily blame a non-vanishing cosmological constant for this effect, since the influence of $c_1$ (that is, the renormalization constant of $\Lambda$) is ruled out by setting $m=0$.

\begin{remark}
\normalfont
Note that for metrics of the form \eqref{eq:cosmological-metric} one can compute
\[
	\tfrac{1}{6}\,\Box\, R=\frac{a^{(4)}}{a}+3\frac{\dot{a}a^{(3)}}{a^2}+\frac{\ddot{a}^2}{a^2}-5\frac{\ddot{a}\dot{a}^2}{a^3},
\]
that is, the first line of \eqref{eq:dyn-sys-trace-eq-decoupled} is proportional to $\square R$. Hence, for parameters 
$\varepsilon:=3c_3+c_4,\xi$ and $\kappa$ $($as well as $\lambda_0)$ such that $k_1\gg k_2,k_3,k_4$, the trace equation is expected to be well approximated by 
\begin{equation}
	\Box \, R=0,\label{boxRisequaltozero} 
\end{equation}
at least sufficiently far away from the singularity defined by $k_2\log(\lambda_0a)-k_1=0$. Note that \eqref{boxRisequaltozero} is also solved  by $a(t)\propto(t-t_\textup{BB})^{\nicefrac{1}{2}}$, by $a(t)\propto\exp(Ht)$ and by $a(t)\propto\sinh(t-t_\textup{BB})^{\nicefrac{1}{2}}$, which in turn solve the $\Lambda$CDM model for particular choices of matter. 
\end{remark}

\begin{figure}
\centering
\begin{subfigure}{.48\textwidth}
\centering
\includegraphics[scale=1]{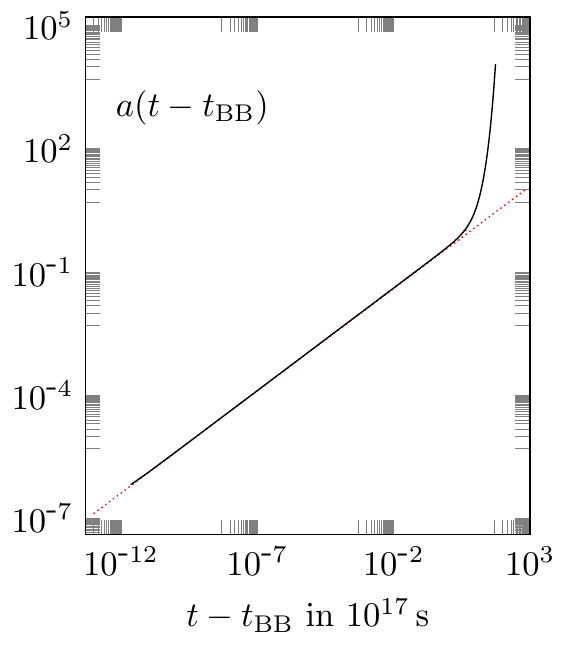}
\end{subfigure}
\begin{subfigure}{.51\textwidth}
\centering
\includegraphics[scale=1]{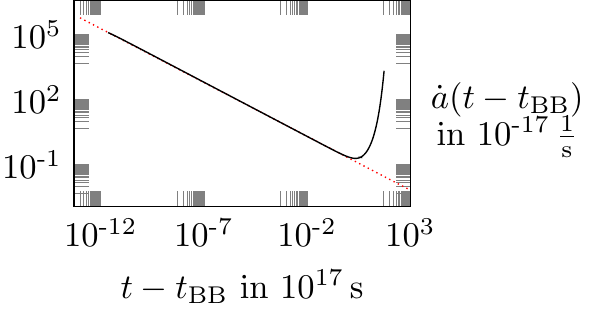}\,\,\,
\\
\includegraphics[scale=1]{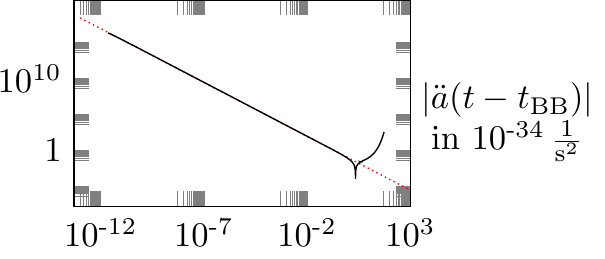}
\vspace{.6cm}
\end{subfigure}
\begin{minipage}{.9\textwidth}
\captionsetup{format=plain, labelfont=bf}
\caption{Double logarithmic plot of the solution in Figure \ref{firstsolution} (black), together with the regression lines whose slopes coincide with a square-root function (dotted red)\label{firstsolutionloglog}}
\end{minipage}
\end{figure}

\section{Numeric solutions of the cosmological semiclassical Einstein equation}
\label{sec:Numeric-solutions-of-the-cosmological-semiclassical-Einstein-equation}


In the present section we identify a few regions of interest in the parameter space of our cosmological model.

Throughout this section we will denote $\varepsilon=3c_2+c_3$ and usually we set $\lambda_0=1$. Moreover, we denote by $\varepsilon_\textup{crit}=\varepsilon_\textup{crit}(\xi)$ the value such that $k_1=k_1(\varepsilon,\xi)$ vanishes. Recall that varying $\dot{a}(0)$ does not influence the shape of our solutions and may be regarded as a redefinition of a time scale (while accordingly adjusting $\kappa$'s units). Thus, we generally omit an investigation of the dependency of our solutions on $\lambda_0$ and $\dot{a}(0)$. 


\subsection{Influence of the renormalization freedom and the curvature coupling}
\label{sec:Influence_ren_freedom_and_curv_coupling}

We start the numerical investigations with the parameter dependency of solutions of the generic type shown in Figures \ref{firstsolution} and \ref{firstsolutionloglog} and present a family of solutions in Figure \ref{eps-influence}. In particular, this includes a few more numerical solutions at $\xi=\frac{1}{12}$ which we count to the generic class.

The choice of values of $\varepsilon$ was made to show the behavior of solutions around the critical value $\varepsilon_\textup{crit}(\frac{1}{12}=-\frac{1}{960\pi^2}\approx-1.0554\cdot10^{\textup{-}4}$. Far remote from this value solutions are captured by the red curve (a) in Figure \ref{eps-influence}. For this critical value we have $k_1=0$ as well as $a_\textup{crit}=1$ and we cannot solve our trace equation for $a^{(4)}$ at our choice of initial values. Close to that value we observe unstable behavior. The value $\xi=\frac{1}{12}$,  where there exists no pure de Sitter solution  (cf.\ Figure \ref{fig:HdS_of_xi}), was chosen as an example for cases with the aforementioned property. Different choices for $\xi$ with $|\xi-\frac{1}{6}|\ge\sqrt{\nicefrac{1}{4320}}$ produce similar graphics.

\begin{figure}[t!]
\centering
\includegraphics[scale=1]{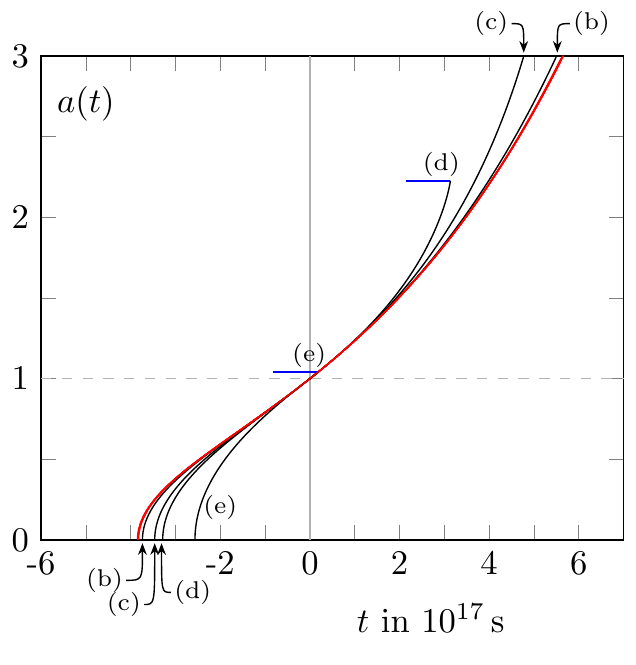}
\qquad\qquad
\includegraphics[scale=1]{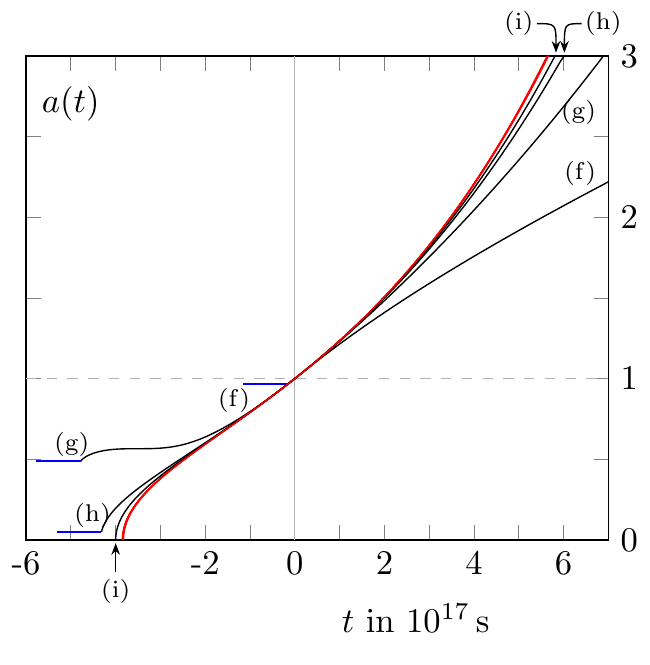}

\begin{minipage}{.41\textwidth}
\captionsetup{format=plain, labelfont=bf}
\caption{Influence of the parameter $\varepsilon=3c_3+c_4$ shown in a family of solutions. The cases `$\ge10^{\textup{-}2}$' and `$\le-10^{\textup{-}2}$', labeled by (a), contain the values $\{-100,-1,-\frac{1}{10},-10^{\textup{-}2}\}$ and $\{10^\textup{{-}2},\frac{1}{10},\frac{1}{2},100\}$, respectively, yielding a critical value for $a$ of numerical zero or numerical infinity. All these solutions show no significant difference among each other and are covered by the red curve. The blue lines mark the respective critical value of $a_\textup{crit}=\exp\big(\frac{k_1}{k_2}\big)$ for the other cases.\label{eps-influence}} 
\end{minipage}\hfill
\begin{minipage}{.57\textwidth}
\begin{tabular}{rclcrclcrcl}
\multicolumn{3}{l}{Parameters:}\\[2pt]
$\dot{a}(0)$\hspace{-.2cm}&\hspace{-.2cm}$=$\hspace{-.2cm}&\hspace{-.2cm}$H_0$,&&
$q_0$\hspace{-.2cm}&\hspace{-.2cm}$=$\hspace{-.2cm}&\hspace{-.2cm}$-0.538$,&&
$\kappa$\hspace{-.2cm}&\hspace{-.2cm}$=$\hspace{-.2cm}&\hspace{-.2cm}$2\cdot10^{42}$,\\
$\lambda_0$\hspace{-.2cm}&\hspace{-.2cm}$=$\hspace{-.2cm}&\hspace{-.2cm}$1$,&&
$\xi$\hspace{-.2cm}&\hspace{-.2cm}$=$\hspace{-.2cm}&\hspace{-.2cm}$\frac{1}{12}$~~~~~&
\end{tabular}
\tabulinesep=2pt

\medskip
~~\begin{tabular}{crccrc}
~&$\varepsilon$~~~~&$\exp\big(\frac{k_1}{k_2}\big){\color{white}\Big|}$~&&$\varepsilon$~~~~&$\exp\big(\frac{k_1}{k_2}\big){\color{white}\Big|}$\\\hline
{\color{red}(a)}&\hspace{-.2cm}$\ge10^{\textup{-}2}$&$\approx0${\color{white}\LARGE A\!\!\!}&(f)&\hspace{-.2cm}$\textup{-}1.1\!\cdot\! 10^{\textup{-}4}$&0.9668
\\
(b)&$10^{\textup{-}3}$&4358.4&(g)&$\textup{-}2\!\cdot\! 10^{\textup{-}4}$&0.4887\\
(c)&$10^{\textup{-}4}$&4.7492&(h)&$\textup{-}5\!\cdot\! 10^{\textup{-}4}$&0.0503\\
(d)&$0$~~~&2.2255&(i)&$\textup{-}10^{\textup{-}3}$&0.0011\\
(e)&$\textup{-}10^{\textup{-}4}$&1.0429&{\color{red}(a)}&$\le\textup{-}10^\textup{{-}2}$&$\approx\infty$
\end{tabular}
\end{minipage}

\end{figure}

Note that any solution exists until it runs into one of the singularities $a=0$ or $a=a_\textup{crit}$. Around the aforementioned instability at $\varepsilon=\varepsilon_\textup{crit}(\frac{1}{12})$ we observe that $a_\textup{crit}$ approaches the value 1 and, accordingly, we end up with a short interval of existence. 

\begin{remark}
\normalfont
We want to emphasize that for any numerical solution we have observed to run into the $a_\textup{crit}$-singularity the values of $\dot{a}$ apparently remain finite in that limit. This is not very surprising since the vector field we integrate for the solution has a pole of order one at $a_\textup{crit}$. Hence, a sloppy analysis suggests that $a$ is a function whose fourth derivative has a pole of order one, implying that its third derivative has a logarithmic pole and that its second and first derivatives as well as $a$ itself can be continuously extended to that critical point and beyond.

Note that the immediate output of our numerical solver, which returns $a$ and its first three derivatives, shows that $\dot{a}$ diverges in such points.  Plugging the solvers output into the energy constraint's RHS and recalling the discussion from the beginning of Section \textup{\ref{sec:General_discussion_on_the_decoupled_cSCE}}, however, suggest this divergence to be a numerical artifact.
\end{remark} 

The unstable behavior for $\varepsilon\to\varepsilon_\textup{crit}(\frac{1}{12})=-\frac{1}{960\pi^2}$  can now be characterized as follows. As $\varepsilon\to\varepsilon_\textup{crit}(\xi)$ we have $a_\textup{crit}\to 1$. Hence, on the one hand, if $\varepsilon>\varepsilon_\textup{crit}(\frac{1}{12})$ (left graphic in figure \ref{eps-influence}) we have an existence interval of the form $(t_\textup{BB},\eta)$ with some $\eta=\eta(\varepsilon)>0$ and some $t_\textup{BB}=t_\textup{BB}(\varepsilon)<0$, where in particular
\[
	\eta\to0\quad\textup{as}\quad\varepsilon\to\varepsilon_\textup{crit}(\tfrac{1}{12})\hspace{1cm}\textup{and}\hspace{1cm} \eta\to \infty\quad\textup{as}\quad\varepsilon\to+\infty.	
\]
Moreover, we observe that
\[
t_\textup{BB}\to t_\textup{BB,eff}\quad\textup{as}\quad\varepsilon\to\varepsilon_\textup{crit}(\tfrac{1}{12})\hspace{1cm}
\]
with some $t_\textup{BB,eff}<0$, playing the role of an effective Big Bang time in the limit.
On the other hand, if $\varepsilon<\varepsilon_\textup{crit}(\frac{1}{12})$ (right graphic in figure \ref{eps-influence}), the solution exist on an interval of the form $(-\eta,\infty)$ for some $\eta>0$, now with 
\[
\eta\to0\quad\textup{as}\quad\varepsilon\to\varepsilon_\textup{crit}(\tfrac{1}{12})\hspace{1cm}\textup{and}\hspace{1cm}\eta\to -t_\textup{BB,(a)}\quad\textup{as}\quad\varepsilon\to-\infty,
\]
where $t_\textup{BB,(a)}<0$ is the Big Bang time of the limit curve (a).

If we want to combine the two resulting limits 
 for $\varepsilon\to\varepsilon_\textup{crit}(\frac{1}{12})$, defined on the intervals $(t_\textup{BB,eff},0)$ and $(0,\infty)$, respectively, our numerical analysis suggests that  we obtain a square-root power law expansion. This is already indicated in curves (e) and (f) in Figure \ref{eps-influence} (or their respective branch) and behavior becomes more pronounced, if we choose values of $\varepsilon$ even closer to $\varepsilon_\textup{crit}(\frac{1}{12})$. 

Finally, 
the term $k_1\Box R$ in the trace equation, which originates in pure quantum effects, usually induces solutions with an exponential late-time expansion as remarked in Section \ref{sec:Short-derivation-of-sce}. 

\begin{remark}
\normalfont
Recall R.\ M.\ Wald's classical work \textup{\cite{Wald_deSitter}}, where he shows that solutions to the $($classical$)$ cosmological Einstein equation with a positive cosmological constant usually $($i.e.\ under some assumption on the stress-energy tensor$)$ show a late time exponential expansion. Thus, it is noteworthy that in our case  the  $k_1\Box R$ term seemingly plays a similar role as the classical cosmological constant. 
Tuning the prefactor of $\Box R$ to zero, we apparently restore a purely radiation dominated expansion.
\end{remark}

\begin{figure}
    \centering
    \includegraphics[scale=1]{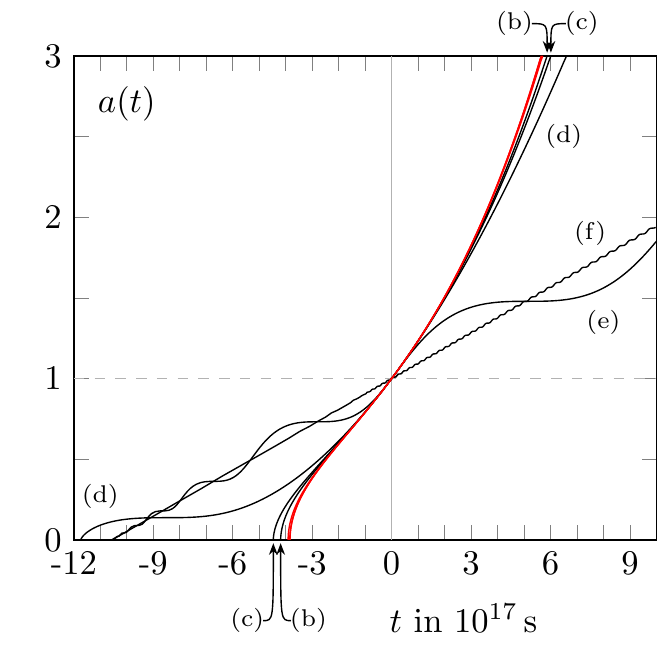}
    \qquad\qquad
    \includegraphics[scale=1]{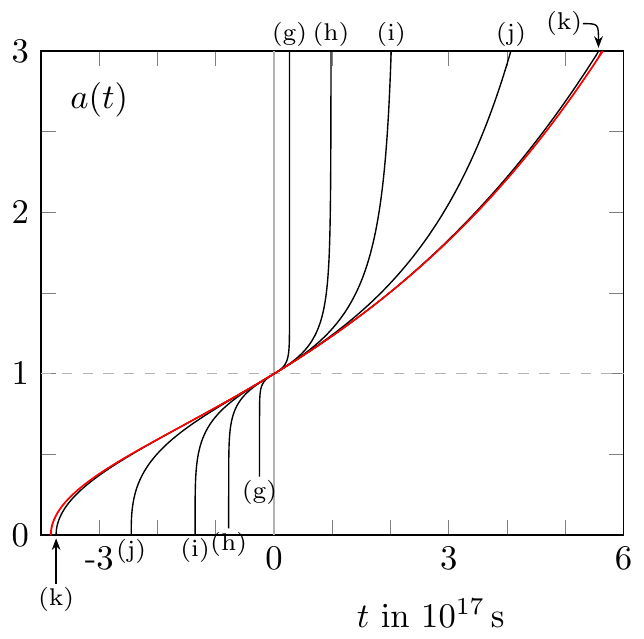}
    \begin{minipage}{.41\textwidth}
\captionsetup{format=plain, labelfont=bf}
\caption{Solutions of the trace equation with the listed parameters. Note that the critical value of renormalization constants is given by $\varepsilon_\textup{crit}=-\frac{1}{5760\pi^2}\approx-1.75905\cdot10^{\textup{-}5}$, in between the respective parameters of curve (f) and (g). Solutions for values above the latter are plotted on the left, for values below on the right. \label{conformally-coupled-case}}
\end{minipage}\hfill
\begin{minipage}{.57\textwidth}
\begin{tabular}{rclcrclcrcl}
\multicolumn{3}{l}{Parameters:}\\[2pt]
$\dot{a}(0)$\hspace{-.2cm}&\hspace{-.2cm}$=$\hspace{-.2cm}&\hspace{-.2cm}$H_0$,&&
$q_0$\hspace{-.2cm}&\hspace{-.2cm}$=$\hspace{-.2cm}&\hspace{-.2cm}$-0.538$,&&
$\kappa$\hspace{-.2cm}&\hspace{-.2cm}$=$\hspace{-.2cm}&\hspace{-.2cm}$2\cdot10^{42}$,\\
$\lambda_0$\hspace{-.2cm}&\hspace{-.2cm}$=$\hspace{-.2cm}&\hspace{-.2cm}$1$,&&
$\xi$\hspace{-.2cm}&\hspace{-.2cm}$=$\hspace{-.2cm}&\hspace{-.2cm}$\frac{1}{6}$&
\end{tabular}
\tabulinesep=2pt

\medskip
\quad\begin{tabular}{crccr}
&$\varepsilon$\quad~&~~~&&$\varepsilon$\quad~\\\hline
{\color{red}(a)}&{\color{white}\LARGE A}$\ge10^{\textup{-}4}$&&(g)&$\textup{-}1.76\!\cdot\!10^{\textup{-}5}$\\
(b)&$10^{\textup{-}5}$&&(h)&$\textup{-}1.77\!\cdot\!10^{\textup{-}5}$\\
(c)&$0$&&(i)&$\textup{-}1.78\!\cdot\!10^{\textup{-}5}$\\
(d)&$\textup{-}10^{\textup{-}5}$&&(j)&$\textup{-}1.8\!\cdot\!10^{\textup{-}5}$\\
(e)&$\textup{-}1.7\!\cdot\!10^{\textup{-}5}$&&(k)&$\textup{-}2\!\cdot\!10^{\textup{-}5}$\\
(f)&$\textup{-}1.759\!\cdot\!10^{\textup{-}5}$&&{\color{red}(a)}&$\le\textup{-}10^{\textup{-}3}$
\end{tabular}
\end{minipage}
\end{figure}

\begin{figure}[t!]
\centering
\begin{minipage}{.42\textwidth}
\centering
\includegraphics[scale=1]{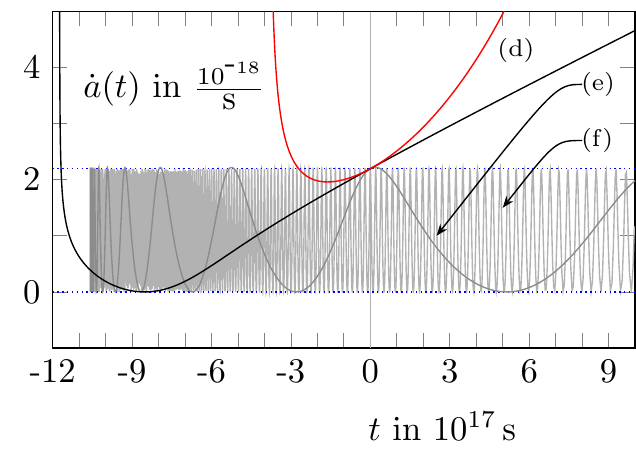}\quad
\end{minipage}\qquad
\begin{minipage}{.42\textwidth}
\includegraphics[scale=1]{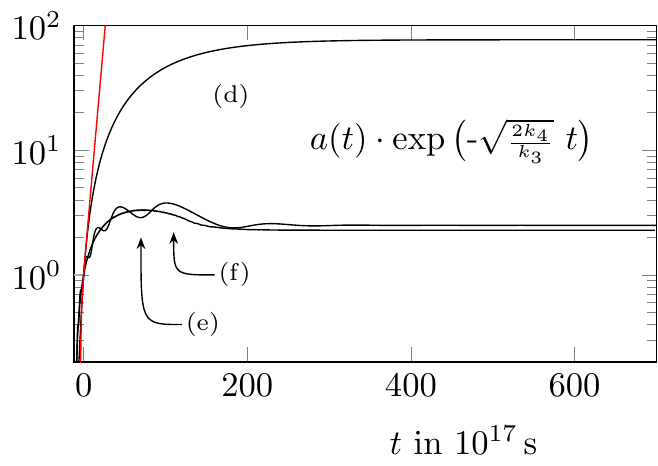}
\end{minipage}

\begin{minipage}{.42\textwidth}
\captionsetup{format=plain, labelfont=bf}
\caption{The graphics show the quantities $\dot{a}$, $a/a_\textup{dS}$ and $\Gamma[\,a\,]$ deduced from $a$ for the parameter settings of Curves (d), (e) and (f) of Figure \ref{conformally-coupled-case}, that is, for $\varepsilon$ approaching $\varepsilon_\textup{crit}$ from above. For reference the red curve shows the `generic' curve with $\varepsilon$ far remote from $\varepsilon_\textup{crit}$, labelled (a) in Figure \ref{conformally-coupled-case}.
\label{conformally-coupled-case-two}}
\end{minipage}\qquad
\begin{minipage}{.42\textwidth}
\centering
\includegraphics[scale=1]{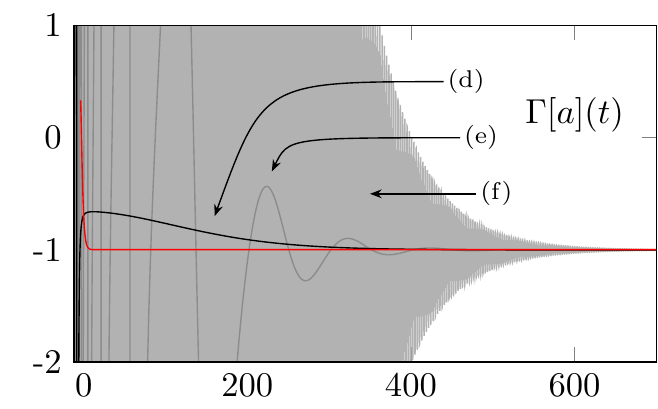}
\end{minipage}
\end{figure}

\bigskip
As a second part of this section, we want to discuss the influence of the curvature coupling $\xi$ by studying a family of solutions with $\varepsilon\to\varepsilon_\textup{crit}$ for another value of $\xi$. From a physical point of view, we have the distinguished cases $\xi=0$, called minimally coupled case, and $\xi=\frac{1}{6}$, called conformally coupled case. Formally, the minimally coupled case plays no particularly distingished role. 


As we have discussed before,  in the case $\xi=\frac{1}{6}$ the trace equation simplifies significantly. Particularly, we have $k_2=0$, which cancels many terms of the trace equation including the $\log(a)$-term. Consequently, the influence of the $k_1\,\Box\, R$-term is controlled by one parameter in a simple manner, namely $\varepsilon$, and not by a possibly singular dependency on the value of $a$ (such as our $\log(a)$-term). Particularly, we have no critical value of $a$ in the limit $k_1\to0$ (i.e.\ in the limit $\varepsilon\to\varepsilon_\textup{crit}(\frac{1}{6})=-\frac{1}{5760\pi^2}$). For $k_1=0$ we are thus back in the Starobinski scenario from Section \ref{sec:General_discussion_on_the_decoupled_cSCE}. Recall that for $\xi=\frac{1}{6}$ the solution of a pure de Sitter expansion with $H^\textup{dS}$ from \eqref{eq:deSitterSolutions} are present.

Some numerical solutions for $\xi=\frac{1}{6}$ are shown in Figure \ref{conformally-coupled-case}. Recall that in Figure \ref{eps-influence} the behavior is different if we approached the critical value of $\varepsilon$ from below or above. But the limiting curves appeared to be, in a suitable sense, consistent. In Figure \ref{conformally-coupled-case}, in turn, this is not the case anymore. As expected, for a large absolute value of $\varepsilon$ we recover the generic solution class as before. 

If we approach $\varepsilon\to\varepsilon_\textup{crit}=-\frac{1}{5760\pi^2}$ from above we can see how an oscillating behavior fades in. 
The oscillation's frequency grows as $\varepsilon\to\varepsilon_\textup{crit}$ and decays as $t\to\infty$. The amplitude, in turn, decays as $\varepsilon\to\varepsilon_\textup{crit}$ but appears to grow as $t\to\infty$. Solutions decay in steps and the slope of these steps is bounded from above by a value slightly larger than the initial value $\dot{a}(0)=H_0$ as well as from below by 0. 

To continue the analysis we have included plots of the quantities $\dot{a}$, $a/a_\textup{dS}$ and $\Gamma[\,a\,]$ deduced from the solution $a$ in Figure \ref{conformally-coupled-case-two} for Curves (d), (e) and (f). There we observe again the aforementioned claims on $\dot{a}$, in particular the (approximate) boundary interval $[0,H_0]$ for $\dot{a}$ is indicated by the blue dotted line. In the limit $\varepsilon\to\varepsilon_\textup{crit}$ from above, our solutions in Figure \ref{conformally-coupled-case} seemingly converge to a linear expansion. This, however, is no longer true on a larger time scale. In Figure \ref{conformally-coupled-case-two} this is shown by rescaling the solutions with the pure de Sitter expansion, that is, by plotting $a/a_\textup{dS}$ with $a_\textup{dS}(t)=\exp(H^\textup{dS}t)$ with $H^\textup{dS}$ from \eqref{eq:deSitterSolutions}.  Note that all solutions for sufficiently small $k_1>0$ result in an exponential late-time expansion with de Sitter rate $H^\textup{dS}$. The latter value can be reproduced by solving the `limit equation'
\[
	a^{(4)}=-\frac{k_3}{k_1}\,\frac{\dot{a}^2\ddot{a}}{a^2}+\frac{k_4}{k_1}\Big(\frac{\dot{a}^2}{a}+\ddot{a}\Big)\qquad\textup{with}\quad a(t)\propto\exp(\widetilde{H}^\textup{dS}t)\quad\textup{and}\quad \widetilde{H}^\textup{dS}=\sqrt{\frac{2k_4}{k_1+k_3}},
\]
where the `limit' hereby refers to, after having solved \eqref{eq:dyn-sys-trace-eq-decoupled} for $a^{(4)}$, neglecting all terms which do not scale by $\frac{1}{k_1}$. It is noteworthy that $\widetilde{H}^\textup{dS}\to H^\textup{dS}$ as $k_1\to0$, that is, the latter limit recovers the $\xi=\frac{1}{6}$\,-value of $H^\textup{dS}$.

The emergence of late-time de Sitter expansions can, moreover, be observed in the $\Gamma[\,a\,]$-plots in Figure \ref{conformally-coupled-case-two}, where at late times each solution yields a Dark Energy dominated universe with $\Gamma[\,a\,](t)\to-1$ as $t\to\infty$. Note that $\Gamma[\,a\,]$ appears as approaching its limit $-1$ similarly to how a damped harmonic oscillator reaches its stable equilibrium. Note that the `generic' solutions with large $k_1$ (the red curves in Figures \ref{conformally-coupled-case} and \ref{conformally-coupled-case-two} labeled as Curve (a)) end in a Dark Energy-dominated late-time expansion as well. However, for sufficiently large $k_1$ the effective late-time de Sitter coefficient differs from the value $H^\textup{dS}$.

If, on the other hand, we approach $\varepsilon\to\varepsilon_\textup{crit}$ from below, the solutions tend  to 0 for $t<0$ and to infinity 
for $t>0$ on decreasingly short time scales. These solutions are shown in the right graphic of Figure \ref{conformally-coupled-case}. However, from a cosmological viewpoint these solutions do not appear particularly useful.

To close the discussion of Figures \ref{eps-influence} and \ref{conformally-coupled-case} (together with \ref{conformally-coupled-case-two}), we remark that similar graphics can be generated for any value of $\xi$. On the one hand, Figure \ref{eps-influence} is representative for values with $|\xi-\frac{1}{6}|\ge\sqrt{\nicefrac{1}{4320}}$ (i.e.\ outside the interval distinguished in Section \ref{sec:General_discussion_on_the_decoupled_cSCE}). On the other hand, for values with $0<|\xi-\frac{1}{6}|<\sqrt{\nicefrac{1}{4320}}$ the effects of Figure \ref{eps-influence} (particularly, the influence of a positive $a_\textup{crit}$) and the effects of Figure \ref{conformally-coupled-case} (particularly, the presence of an attractive de Sitter solution with parameter $H^\textup{dS}$) mix up, but we have not found any new behavior of solutions with $\varepsilon$ around $\varepsilon_\textup{crit}$. 

\begin{figure}[t!]
\begin{subfigure}{.495\textwidth}
\centering
\includegraphics[scale=1]{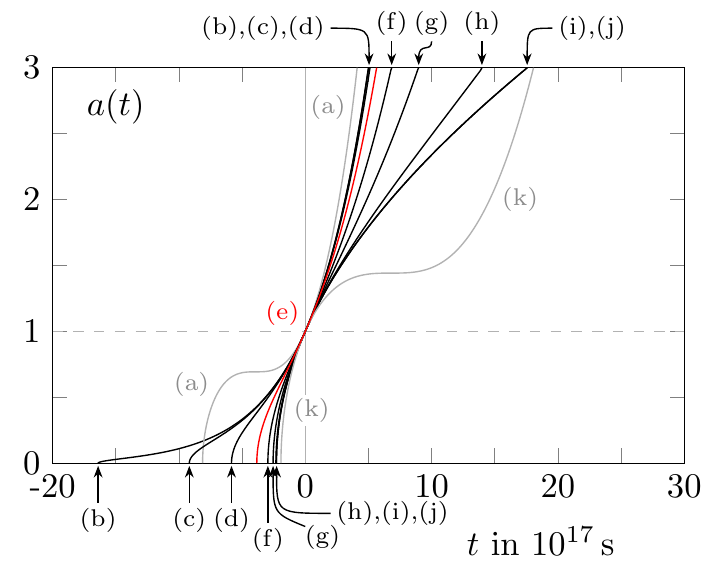}
\end{subfigure}
\begin{subfigure}{.495\textwidth}

\includegraphics[scale=1]{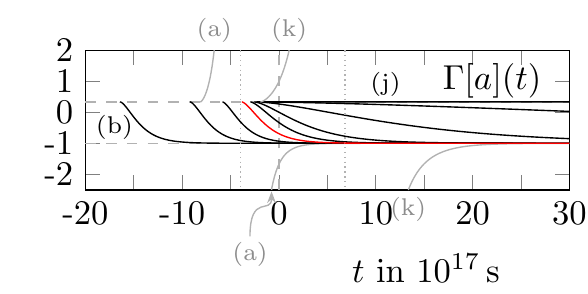}

\vspace{-.2cm}
\includegraphics[scale=1]{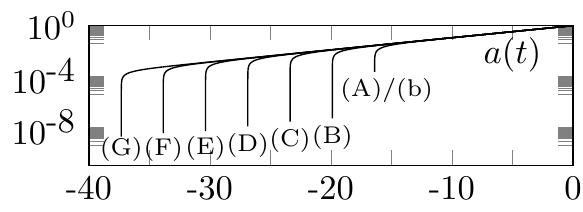}

\vspace{.5cm}
\end{subfigure}
\begin{minipage}{.45\textwidth}
\captionsetup{format=plain, labelfont=bf}
\caption{The left graphic shows several solutions with a varying deceleration para- meter $q_0\in[-1,1]$ labeled by (b) to (j). The gray curves, (a) and (k), show solutions for $q_0$ outside that interval. On the right top we show the respective plots of $\Gamma[\,a\,]$. The unlabeled curves belong to the parameters in the obvious order, that is, between (b) and (j) we have (c) to (i) from left to right. The lower right graphic shows several solutions with $q_0=-1$ in a logarithmic plot, indicating how with an increasing value of $\varepsilon$ the solutions better and better approximate a de Sitter solution, also at small values of $a$. Note that for these parameters $a_\textup{crit}$ is numerically infinite.\label{deceleration-par-fig}}
\end{minipage}
\begin{minipage}{.54\textwidth}
\begin{tabular}{rclcrcl}
\multicolumn{4}{l}{Parameters:}\\[2pt]
$\dot{a}(0)$\hspace{-.2cm}&\hspace{-.2cm}$=$\hspace{-.2cm}&\hspace{-.2cm}$H_0$&&$\kappa$\hspace{-.2cm}&\hspace{-.2cm}$=$\hspace{-.2cm}&\hspace{-.2cm}$2\cdot10^{42}$\\
$\lambda_0$\hspace{-.2cm}&\hspace{-.2cm}$=$\hspace{-.2cm}&\hspace{-.2cm}$1$&&$\xi$\hspace{-.2cm}&\hspace{-.2cm}$=$\hspace{-.2cm}&\hspace{-.2cm}$\frac{1}{12}$
\end{tabular}

\medskip
~~\begin{tabular}{llll}
\multicolumn{2}{c}{$\varepsilon=1$:~~~}&\multicolumn{2}{c}{$q_0=-1$:}\\\hline
(a)&$q_0=-2$&(A)&$\varepsilon=1$\\
(b)&$q_0=-1$&(B)&$\varepsilon=10$\\
(c)&$q_0=-0.99$&(C)&$\varepsilon=10^2$\\
(d)&$q_0=-0.9$&(D)&$\varepsilon=10^3$\\
\color{red}(e)&$q_0=-0.538$&(E)&$\varepsilon=10^4$\\
(f)&$q_0=0$&(F)&$\varepsilon=10^5$\\
(g)&$q_0=0.5$&(G)&$\varepsilon=10^6$\\
(h)&$q_0=0.9$&\\
(i)&$q_0=0.99$&\\
(j)&$q_0=1$&\\
(k)&$q_0=2$&
\end{tabular}
\end{minipage}
\end{figure}

We remark that similar observations have been made in \cite{Dappiaggi_etal_DE_from_QFT}, where the authors approximate the state's contributions to the back-reaction equation. In this different setting they also observe that the respective Starobinski solution $a(t)\propto\exp(H^\textup{dS}t)$ is attractive if $\varepsilon-\varepsilon_\textup{crit}$ has the correct sign, and is repulsive for the respective opposite sign.

The solutions shown in this section, at least for $\varepsilon>\varepsilon_\textup{crit}(\xi=\frac{1}{6})$, underpin our observation of a late-time de Sitter expansion being generic. 

\subsection{Influence of the initial values}

If we specify a certain interval of `reasonable' $q_0$-values, we again end up  with the generic solution class from Section \ref{sec:General_discussion_on_the_decoupled_cSCE}, where by `reasonable' we refer to values $q_0\in[-1,1]$, that is, such initial valued for $\ddot{a}$ for which the initial value of $\Gamma[\,a\,]$ fulfills $\Gamma[\,a\,]\in[-1,\frac{1}{3}]$.

The left graphic of Figure \ref{deceleration-par-fig} shows the transition from an (approximately) exponential expansion ($q_0=-1$, curve (b)) to a square-root-like expansion ($q_0=1$, curve (j)). Still, for $q_0=-1$ we observe a radiation-like expansion at very early times. The upper right graphic in Figure \ref{deceleration-par-fig} shows the respective curves of $\Gamma[\,a\,]$. 
The lower right graphic shows a family of solutions with a variation of $\varepsilon$ on a logarithmic scale, starting with curve (b) and increasing $\varepsilon$. 
The Dark Energy dominated period is pushed to  smaller values of $a$ by increasing $\varepsilon=3c_3+c_4$, or $k_1$, respectively. 

Curves (a) and (k) in Figure \ref{deceleration-par-fig} show solutions with values for the deceleration parameter outside the interval $[-1,1]$, namely for $q_0=-2$ and $q_0=2$. On both sides of said interval we observe an inflection point with zero derivative, at $t<0$ for $q_0<-1$ and at $t>0$ for $q_0>1$. Plotting more curves, one would, moreover, see convergence of this inflection point  to $t=0$ for both $q_0\to \infty$ and $q_0\to-\infty$ and in both these limits the solutions  converge to the same function, now with an inflection point with zero derivative at $t=0$. An inflection point with zero derivative of some $a$ does imply a divergence of $\Gamma[\,a\,]$, which we can observe in the upper right graphic of Figure \ref{deceleration-par-fig}.

\subsection{Cosmic horizon problem}
\label{sec:Horizon}
We shortly recall the definition of conformal time. For a FLRW-type space-time with scaling factor $a$, we reparameterize the time coordinate by $\tau(t)=\int_0^t\big(a(t')\big)^{\textup{-}1}~\textup{d}t'$ such that $g=a(\tau)^2(-\textup{d}\tau^2+g_{\R^3})$ holds in these new coordinates. In conformal time, a causal connection of two space-time points is given, if and only if they are causally connected in Minkowski space-time. For a Big Bang-solution $a$ with zero $t_\textup{BB}$ we define $\tau_\textup{BB}:=\tau(t_\textup{BB})$. 

The cosmic horizon problem concerns the extremely homogeneous state of the observable universe.  If, in an universe given by $a(t)$, two observable regions with the same matter distribution  are not causally connected, this would exclude a homogenizing process in the common causal past of both regions. One solution to the cosmic horizon problem is that \emph{all} observable regions of the universe have a common causal past, which is achieved    
by a large negative value of $\tau_\textup{BB}$ or even $\tau_\textup{BB}=-\infty$. This is realized by theories of the inflationary early universe \cite{Guth,Linde,Liddle}.


Here, we want to investigate how much the cosmological model introduced in the present paper is compatible with solutions to the cosmic horizon problem. We observed in Section \ref{sec:Influence_ren_freedom_and_curv_coupling} that in specific regions of the parameter space the solutions $a(t)$ show an inflection point with vanishing first derivative and we can even have arbitrarily many of them, see e.g. Figure \ref{conformally-coupled-case} with $\varepsilon>\varepsilon_\textup{crit}(\xi)$. 
Tuning $\varepsilon$ such that the inflection point coincides with the Big Bang,  we obtain $\tau_\textup{BB}=-\infty$. Note that choosing $\xi\neq\frac{1}{6}$ requires a value $\varepsilon>\varepsilon_\textup{crit}(\xi)$ in order to guarantee $a_\textup{crit}>1$, otherwise the solutions do not exist long enough to admit a Big Bang.

\begin{figure}
\centering
\begin{subfigure}{.495\textwidth}
\centering
\includegraphics[scale=1]{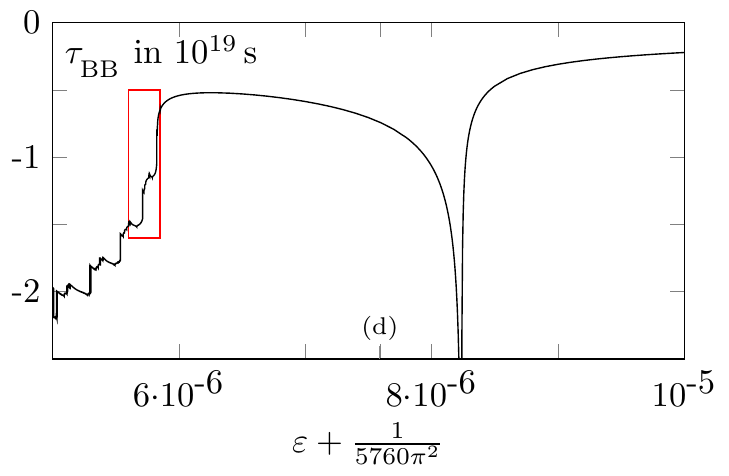}
\end{subfigure}
\begin{subfigure}{.495\textwidth}
\centering
\includegraphics[scale=1]{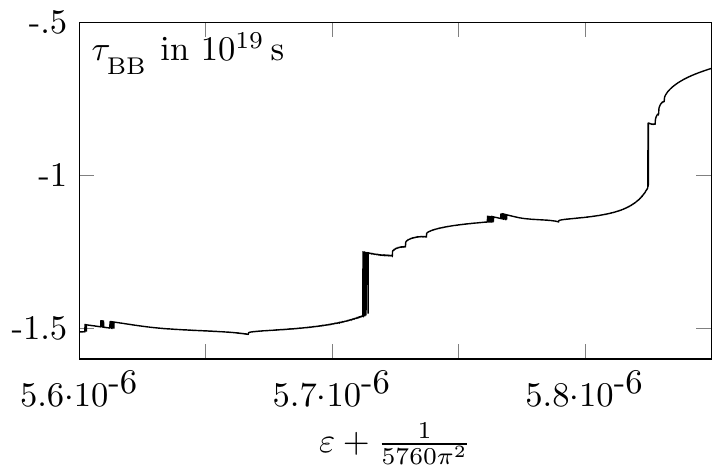}
\end{subfigure}
\begin{minipage}{.9\textwidth}
\captionsetup{format=plain, labelfont=bf}
\caption{$\tau_\textup{BB}$ as a function of $\varepsilon$ or, more precisely, its deviation from the critical value (i.e.\ for which $\exp\big(\frac{k_1}{k_2}\big)=1$), in the case $\xi=\frac{1}{6}$. The remaining parameters are set to the standard values from Section \ref{sec:General_discussion_on_the_decoupled_cSCE}. The red box in left plot marks the plot area of the right graphic. The tick (d) marks the $\varepsilon$-value of curve (d) in Figure \ref{conformally-coupled-case}.\label{taubigbangpole}}
\end{minipage}
\end{figure}

Figure \ref{taubigbangpole} shows $\tau_\textup{BB}$ as a function of $\varepsilon=3c_3+c_4$ for fixed $\xi=\frac{1}{6}$ and $H_0,q_0,\lambda_0$ and $\kappa$ as in Section \ref{sec:General_discussion_on_the_decoupled_cSCE}. 
We identify a divergence of $\tau_\textup{BB}$ as expected. 
At the left end of the plot, edited as a zoom in the right panel, we observe several discontinuities. Comparing the respective numerical solutions, we find that in each discontinuous step of $\tau_\textup{BB}$ the solution gathers another inflection point. The thick-lined part marks a discontinuity where $\tau_\textup{BB}$ jumps between two regions of continuity. 

In the left graphic of Figure \ref{taubigbang_2ndgraphic} we see the analog of Figure \ref{taubigbangpole} for some values $\xi\neq\frac{1}{6}$. Our observations match the expectations, namely that also for $\xi\neq\frac{1}{6}$ (but still close to $\frac{1}{6}$) the solutions show an oscillatory behavior which results in several inflection points with zero first derivative which, if shifted to the $a=0$-singularity by tuning $\varepsilon$, yields a divergence of $\tau_\textup{BB}$. The right panel displays this behavior as a function of both $\varepsilon-\varepsilon_\textup{crit}$ and $\xi$. 

Such negative poles of $\tau_\textup{BB}$  only exist up to a certain value of $|\xi-\frac{1}{6}|$, where the highest blue band (labelled $\Sigma$ in Figure \ref{taubigbang_2ndgraphic}) in the right panel of Figure \ref{taubigbang_2ndgraphic} meets the vertical axis on the left. Determining this value numerically, we find that for this $\xi$ value $|\xi-\frac{1}{6}|=\raisebox{3pt}{$\sqrt{\smash{\raisebox{-3pt}{$\scriptstyle\frac{1}{4320}$}}}$}$ (up to numerical error), that is, the maximum deviation of $\xi$ from $\frac{1}{6}$ such that our trace equation possesses exact de Sitter solutions specified in Section \ref{sec:General_discussion_on_the_decoupled_cSCE}. 
Recalling the discussion there, this is not surprising as the observed oscillations appear as decaying perturbations around the stable de Sitter solution.


To conclude, although the model introduced in this article \emph{can} solve the cosmic horizon problem, fine tuning of the renormalization constants is required and the resulting cosmologies are not close to the $\Lambda$CDM cosmological standard model. 


\section[Comparison with numerical $\Lambda$CDM model solutions]{Comparison with numerical $\mathbf{\Lambda}$CDM model solutions}
\label{sec:ComparisonLCDM}
In this section, we want to compare our model's solutions to the $\Lambda$CDM model's solutions with the parameters $\Omega_\textup{rad}=5.38\cdot10^{-5},~\Omega_\textup{dust}=0.315,~\Omega_\textup{DE}=0.685$ and $H_0=2.2\cdot10^{\textup{-}18}\,\frac{1}{\textup{s}}$ from \cite{pdg-data}. For this purpose, we fit our model's parameters to the $\Lambda$CDM solution using different measures of deviation. In this way, we obtain a rough idea of parameter regions of our model that produce `reasonable' cosmologies, despite the fact that a detailed investigation would require the inclusion of massive fields and therefore goes beyond the scope of this paper.  

\begin{figure}
\begin{subfigure}{.55\textwidth}
\centering
\includegraphics[scale=1]{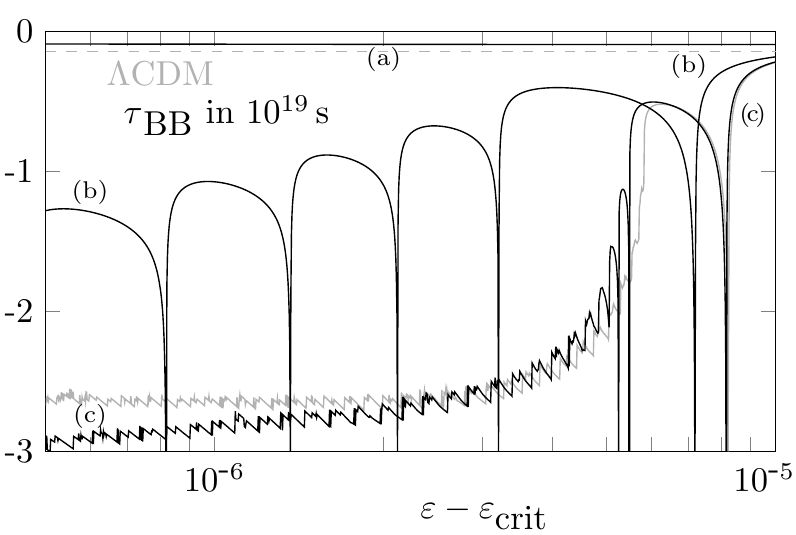}
\end{subfigure}
\begin{subfigure}{.44\textwidth}
\centering
\includegraphics[scale=1]{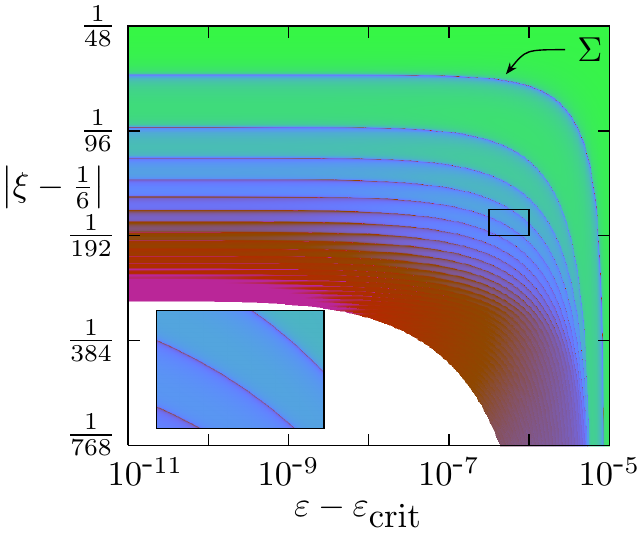}
\end{subfigure}

\begin{minipage}{.57\textwidth}
\captionsetup{format=plain, labelfont=bf}
\caption{The left graphic shows $\tau_\textup{BB}$ as a function of $\varepsilon-\varepsilon_\textup{crit}(\xi)$ for $\xi$-values close (but not equal) to $\frac{1}{6}$ which are listed to the right. The background curve in gray shows the reference curve from Figure \ref{taubigbangpole} with $\xi=\frac{1}{6}$. All remaining parameters are set to the standard values from Section \ref{sec:General_discussion_on_the_decoupled_cSCE}. The gray dashed line marks the analog value of the $\Lambda$CDM model as discussed in Section \ref{sec:General_discussion_on_the_decoupled_cSCE}. The right graphic shows $\tau_\textup{BB}$, now as a function of both $\varepsilon-\varepsilon_\textup{crit}(\xi)$ and $\xi$ together with a zoom of the boxed area. For a later purpose, we label the top right connected set of poles of $\tau_\textup{BB}$ by $\Sigma$.\label{taubigbang_2ndgraphic}}
\end{minipage}
\begin{minipage}{.39\textwidth}
\begin{tabular}{lll}
~&(a)&$\xi=\frac{1}{6}-\frac{1}{48}${\color{white}$\big|$}\\
~&(b)&$\xi=\frac{1}{6}-\frac{1}{192}${\color{white}$\big|$}\\
~&(c)&$\xi=\frac{1}{6}-\frac{1}{768}${\color{white}$\big|$}
\end{tabular}

\bigskip
~~\includegraphics[scale=1]{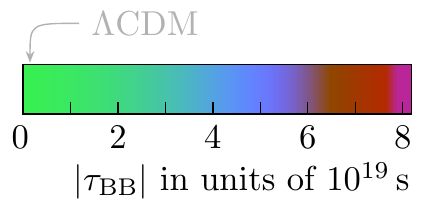}

\end{minipage}
\end{figure}

\subsection[$\Lambda$CDM uncertainty band]{$\mathbf{\Lambda}$CDM uncertainty band}\label{LCDM-uncertainty-band}
The $\Lambda$CDM parameters come with uncertainty errors, namely the 1-$\sigma$ uncertainties given by (cf.\ \cite{pdg-data})
\begin{align*}
\begin{aligned}
\Omega_\textup{rad}&=(5.38\pm0.15)\cdot10^{-5},\\
\Omega_\textup{DE}&=0.685\pm0.007,
\end{aligned}\hspace{1cm}
\begin{aligned}
	 \Omega_\textup{dust}&=0.315\pm0.007,\\
	 H_0&=(2.184\pm0.016)\cdot10^{\textup{-}18}\,\tfrac{1}{\textup{s}}.
\end{aligned}
\end{align*}
Bounded by these errors, we obtain a cuboid $Q$ in the $\Lambda$CDM parameter space. For each $y\in Q$, we denote the respective $\Lambda$CDM solution by $\alcdm(y):\R\to[0,\infty)$, wherefore we extend such a solution at the Big Bang and before by zero. By setting
\[
	a_\textup{max}(t):=\sup_{y\in Q}~\big(\alcdm(y)\big)(t)\hspace{1cm}\textup{and}\hspace{1cm}a_\textup{min}(t):=\inf_{y\in Q}~\big(\alcdm(y)\big)(t),
\]
we obtain an uncertainty band of the $\Lambda$CDM model in the $t$-$a$-plane that is compatible with 1-$\sigma$-errors in the single parameters.

As a numerical test of our model we want to determine a certain region of the $\xi$-$\varepsilon$-plane for fixed remaining parameters such that the solution $a$ fulfills
\begin{equation}
	a_\textup{min}(t)\le a(t)\le a_\textup{max}(t)\label{uncertainty-band-cond}
\end{equation}
for all $t\in\R$, where we likewise extend our solutions by zero before a Big Bang. 

\begin{figure}[t]
    \centering
    \begin{subfigure}{.495\textwidth}
    \centering
    \includegraphics{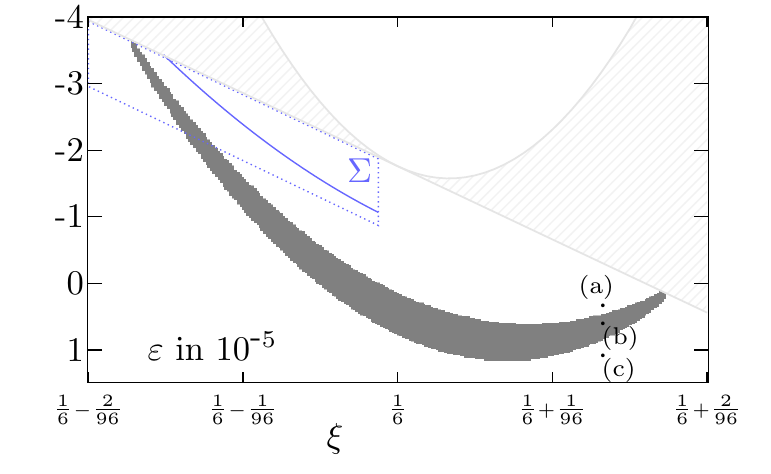}
    \end{subfigure}
    \hfill
    \begin{subfigure}{.495\textwidth}
    \centering
    \includegraphics{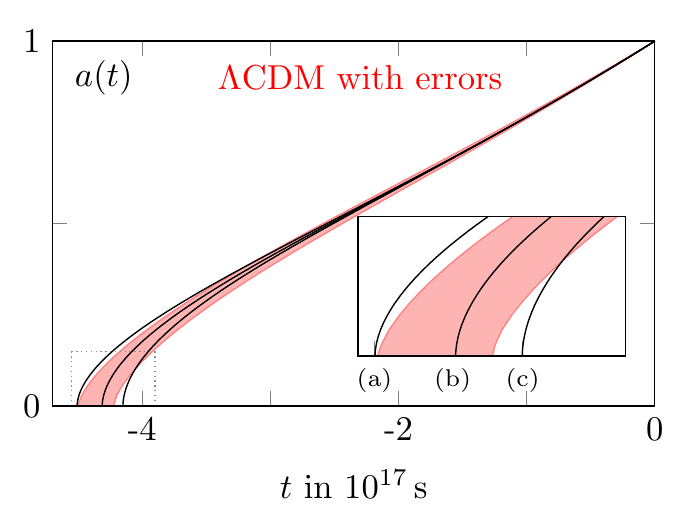}
    \end{subfigure}
    \begin{minipage}{.9\textwidth}
    \captionsetup{format=plain, labelfont=bf}
    \caption{The gray area in the left graphic marks the $\xi$-$\varepsilon$-points in which the solution of the cSCE fulfills \eqref{uncertainty-band-cond} for fixed remaining parameters as in Section \ref{sec:General_discussion_on_the_decoupled_cSCE}. The shaded area marks $\xi$-$\varepsilon$-points for which $a_\textup{crit}\in\big[\frac{1}{3001},1\big]$ holds. The points (a), (b) and (c) each mark an exemplary solution above, in and below the gray area, respectively. For orientation, the blue dotted parallelogram marks the boundaries of the right plot in Figure \ref{taubigbang_2ndgraphic} and therein the blue line marks the set $\Sigma$. The right graphic illustrates our numerical test by showing the $\Lambda$CDM uncertainty band in the $t$-$a$-plane in red together with the solutions of the cSCE corresponding to the parameter points (a), (b) and (c).\label{parspaceregion-figure}}
    \end{minipage}
\end{figure}
For fixed $H_0$, $q_0$, $\lambda_0$ and $\kappa$ as in Section \ref{sec:General_discussion_on_the_decoupled_cSCE}, Figure \ref{parspaceregion-figure} shows the region in the $\xi$-$\varepsilon$-plane where the solutions of our model satisfy \eqref{uncertainty-band-cond}. The shaded area in the left graphic of Figure \ref{parspaceregion-figure} marks the parameter region where $\exp\big(\frac{k_1}{k_2}\big)\in\big[\frac{1}{3001},1\big]$ holds. The left bound of said interval corresponds to the upper 
parabola-shaped bound of the shaded region. The right bound, in turn, corresponds to the linear lower bound of the shaded area, that is, it corresponds to $\varepsilon_\textup{crit}(\xi)$. 

As we can see, there exist parameters for which \eqref{uncertainty-band-cond} holds. They form a hook-shaped subset, narrowly distributed around the conformally coupled case $\xi=\frac{1}{6}$ and around the respective $\varepsilon_\textup{crit}(\xi)$. As mentioned before, our trace equation is symmetric under reflection at $\xi=\frac{1}{6}$ $-$ if we additionally adjust $\varepsilon$. Hence, the gray area has a symmetric shape if we skew the graphic in a way such that the values of $\varepsilon_\textup{crit}$ form a horizontal line. In the right graphic of Figure \ref{parspaceregion-figure} visualizes the uncertainty band defined by \eqref{uncertainty-band-cond} together with some sample curves. These represent the three possibilities of the solutions fulfilling the first inequality of \eqref{uncertainty-band-cond}, the second one or both of them, depending on whether the corresponding parameter point is below, above or inside the hook-shaped area, respectively. 

\subsection{Best parameter fit}
\label{Best-parameter-fit}
We next tune our model parameters in a way such that the solution is as close as possible to the $\Lambda$CDM solution. 

The major difficulty which prevents us from defining distance simply by some $L^p$-norm ($p\ge1$) is that the solutions of our models exist on variable  intervals. 
We therefore define a distance function  as follows. We first note that the $\Lambda$CDM solution is strictly monotonically increasing and continuous, hence invertible. Furthermore, the solutions of our model are invertible by the same argument, at least if we stay in the parameter regions of our `generic solution shape' of Section \ref{sec:General_discussion_on_the_decoupled_cSCE}. Thereby, we define the distance between the $\Lambda$CDM solution 
$\alcdm:(t_{\textup{BB},\Lambda\textup{CDM}}\,,\,\infty)\to\R$
(with the parameters from Section \ref{sec:General_discussion_on_the_decoupled_cSCE}) and a solution of our model
$
	a=a(\xi,\varepsilon,\kappa,H_0,q_0):(t_{\textup{BB},(\xi,\varepsilon,\kappa,H_0,q_0)},\infty)\to\R
$
by
\[
	d_{M,p}\big(a,\alcdm\big):=\left(~\int\limits_0^M \big|a^{\textup{-}1}(\alpha)-\alcdm^{\textup{-}1}(\alpha)\big|^p~\textup{d}\alpha\right)^{\nicefrac{1}{p}}
\]
with some $p\ge 1$ and some $M>0$ such that $a$ exists up to the value $M$. We solve the minimization problem
\begin{equation}
	\min_{(\xi,\varepsilon,\kappa,H_0,q_0)} d_{M,p}\big(a(\xi,\varepsilon,\kappa,H_0,q_0),\alcdm\big)\label{minimization-eq}
\end{equation}
where $(\xi,\varepsilon,\kappa,H_0,q_0)\in\R\times\R\times\R_{>0}\times\R_{>0}\times\R$. 

Note that a value of $M=1$ seems reasonable since in this context we consider the $\Lambda$CDM as a representation of experimental data and they are obviously measured at times where $a\le1$. Furthermore, we choose $p=2$ to suppress large deviations. 

By our previous discussions, we do not expect a unique minimum due to the symmetries of our trace equation under $\xi\mapsto\frac{1}{3}-\xi$ (and adjusting $\varepsilon$ to obtain the same deviation from $\varepsilon_\textup{crit}(\xi)$). 
Therewith, the minimum of course depends on the starting values for a downhill simplex algorithm. If the initial value of $\varepsilon$ is greater than $\varepsilon_\textup{crit}(\xi)$ (w.r.t the initial $\xi$) we would not expect the algorithm to be able to pass the $\varepsilon_\textup{crit}=0$-hypersurface\footnote{Here we refer to the hypersurface in the space $\R_\xi\times\R_\varepsilon\times(\R_{>0})_{\kappa}\times(\R_{>0})_{H_0}\times\R_{q_0}$ parameterized by $(\xi,\varepsilon_\textup{crit}(\xi))\in\R\times\R$ with $\xi\in\R$ in the first coordinates and arbitrarily in the remaining coordinates.} due to the behavior around the values $\varepsilon_\textup{crit}(\xi)$ presented in the previous sections. Also we would not expect the algorithm to pass the $\kappa=0$ hypersurface for the same reason. Furthermore, do we expect the deceleration parameter to remain in the interval $[-1,1]$ of reasonable values, since otherwise the inflection points presented in Figure \ref{deceleration-par-fig} fade in and yield a large $d_{M,p}$-distance for any choice of $(M,p)$.

\begin{figure}
\centering
\begin{subfigure}{.62\textwidth}
\centering
\includegraphics[scale=1]{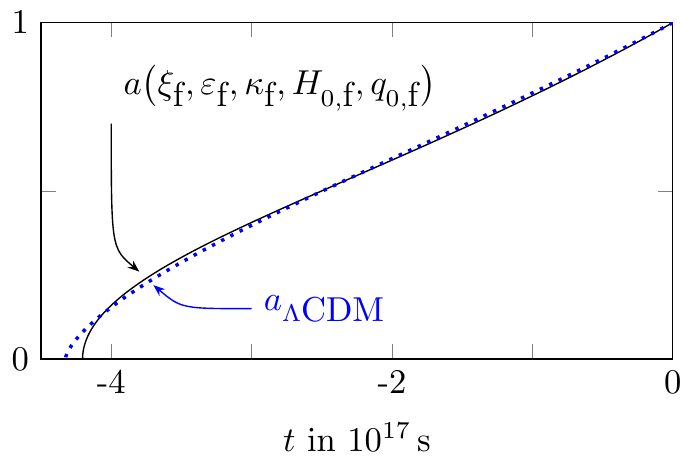}
\end{subfigure}
\hfill
\begin{subfigure}{.37\textwidth}
\begin{tabular}{lll}
$\xi_\textup{i}$&=&$\tfrac{1}{6}$\\
$\varepsilon_\textup{i}$&=&$1$\\
$\kappa_\textup{i}$&=&$2\cdot10^{42}$\\
$H_{0,\textup{i}}$&=&$2.1975\cdot 10^{\textup{-}18}\,\frac{1}{\textup{s}}$\\
$q_{0,\textup{i}}$&=&$-0.538$\\
~\\
$\xi_\textup{f}$&=&$0.1651$\\
$\varepsilon_\textup{f}$&=&$1.1152$\\
$\kappa_\textup{f}$&=&$2.6842\cdot10^{42}$\\
$H_{0,\textup{f}}$&=&$2.3507\cdot 10^{\textup{-}18}\,\frac{1}{\textup{s}}$\\
$q_{0,\textup{f}}$&=&$-0.7284$
\end{tabular}
\end{subfigure}
\begin{minipage}{.9\textwidth}
\captionsetup{format=plain, labelfont=bf}
\caption{One (local) minimizer of \eqref{minimization-eq} determined via a downhill simplex (Nelder-Mead) algorithm using the initial values $Z_\textup{i}$ on the right. The algorithm returned the values $Z_\textup{f}$ for which we see the solution on the left. As a reference the dotted line shows the $\Lambda$CDM solution.\label{fit-plot}}
\end{minipage}
\end{figure}

As a minimizer we find the parameters $Z_\textup{f}$ by using the exemplary initial values $Z_\textup{i}$ according to the table in Figure \ref{fit-plot}. The plot in Figure \ref{fit-plot} shows the respective solution to our model together with the $\Lambda$CDM solution (blue dotted). As we expected, we end up with a value of $\xi_\textup{f}$ close to $\frac{1}{6}$, with a value of $\varepsilon_\textup{f}\ge\varepsilon_\textup{crit}(\xi_\textup{f})$ and a value $q_{0,\textup{f}}\in[-1,1]$. Also $H_{0,\textup{f}}$ and $\kappa_\textup{f}$ remain close to $H_{0,\textup{i}}$ and $\kappa_\textup{i}$, respectively. Note that we only considered solutions with $\varepsilon>\varepsilon_\textup{crit}$ such that $a_\textup{crit}\in(0,1)$ is avoided and $a(t)$ covers $a$-values in the entire interval $(0,1)$. 


\subsection{The $\mathbf{\Delta N_\textup{eff}}$-test}
\label{Neff-test}

As one further method of comparing properties of our model to the respective properties of the $\Lambda$CDM model, we apply the $\Delta N_\textup{eff}$-test suggested by \cite{Hack:2015zwa} as a procedure to obtain limits for parameters in the SCE, see \cite{mangano2002precision,mangano2005relic}. Following the literature, we reparameterize the FLRW space-time with scale factor $a(t)$ via the red shift factor
\begin{equation}
	z(t)=\tfrac{1}{a(t)}-1\label{redshiftdef}.
\end{equation}
$N_\textup{eff}$ is the effective number of neutrino families, which can be related to $\Omega_\textup{rad}$ in \eqref{lcdm_equation} via
\begin{equation}
\label{eq:Neff}
\Omega_\textup{rad}=\Omega_\gamma\left(1+\frac{7}{8}\left(\frac{4}{11}\right)^{\nicefrac{4}{3}}N_\textup{eff}\right),
\end{equation}
where $\Omega_\gamma$ is given by the energy content of photons in the present universe at $t=0$. Note that from the observation of the cosmological microwave background (CMB) there are experimental values for $N_\textup{eff}$ stemming from the temperature spectrum of cosmic neutrinos in the CMB which slightly deviate from the thermal distribution. This deviation, in turn, can be computed from the energy distribution provided from a solution to the Boltzmann equation, in which the rate of expansion at the time of decoupling (between $z=3000$ and $z=1100$) enters \cite{mangano2002precision,mangano2005relic}. These calculations also derive the deviation from the number of neutrino families $N=3$ and also the prefactor $\frac{7}{8}\left(\frac{4}{11}\right)^{\nicefrac{4}{3}}=0.2271$.  The theoretical considerations, moreover, involve data from the Big Bang-nucleosynthesis (BBN) at $z\approx 10^9$, where the observed fraction of helium depends on the expansion rate.  The experimental findings are well compatible with the theoretical prediction $N_\textup{eff}=3.046$, i.e. $N_\textup{eff}=3.36\substack{+0.68 \\ -0.64}$ from the CMB power spectrum and $N_\textup{eff}=3.52\substack{+0.48 \\ -0.45}$ at BBN with 95\% confidence each, see \cite{Hack:2015zwa,Planck-Collab}.  The theoretical value for $N_\textup{eff}$ along with $\Omega_\textup{rad}=5.38\cdot 10^{-5}$ by \eqref{eq:Neff} results in $\Omega_\gamma=3.18\cdot 10^{-5}$.   

To connect $N_\textup{eff}$ to the rate of expansion, we define the difference of the squared normalized expansion rate to the theoretical prediction at the standard value for $N_\textup{eff}$ as
\begin{equation}
\label{eq:dHsq}
    \delta \left(\frac{H}{H_0}\right)^2(z):=\frac{1}{a(z)^2}\,\frac{\dot{a}(z)^2}{\dot{a}(0)^2}-\Omega_\textup{DE}-\Omega_\textup{dust}(1+z)^3-\Omega_\textup{rad}(1+z)^4,
\end{equation}
where in $\dot{a}(z)$ we first take the derivative with respect to $t$ and then reparameterize by \eqref{redshiftdef}.

Following \cite{Hack:2015zwa}, we define $\Delta N_\textup{eff}=N_\textup{eff}-3.046$ as the deviation of $N_\textup{eff}$ from the theoretical value given in \cite{mangano2002precision,mangano2005relic}.  Now, we can express the difference between the squared and normalized expansion rate $\big(\frac{H}{H_0}\big)^2(z)$ at the red shift parameter $z$ via    $\Delta N_\textup{eff}$ and obtain
\begin{equation}
	\Delta N_\textup{eff}(z)=\frac{1}{\Omega_\gamma}\,\frac{\delta \big(\frac{H}{H_0}\big)^2(z)}{0.2271(1+z)^4}.
\end{equation}
This parameterization of the observed difference in expansion can be used to check whether the error bounds, roughly $|\Delta N_\textup{eff}|\lesssim 1$ are fulfilled. As the BBN red shift for $z=10^9$ is hard to achieve numerically, we restrict to the CMB case and determine  $\Delta N_\textup{eff}(z=3000)$ as a function of the model parameters via simulations. $z=10^9$ for the BBN is beyond the capabilities of our solver.
For this purpose, we plot $\Delta N_\textup{eff}$ as a function of our parameters. 
Hereby, we restrict to the $\xi$-$\varepsilon$-plane and fix the remaining parameters as in Section \ref{sec:General_discussion_on_the_decoupled_cSCE}.

\begin{figure}
    \centering
    \includegraphics{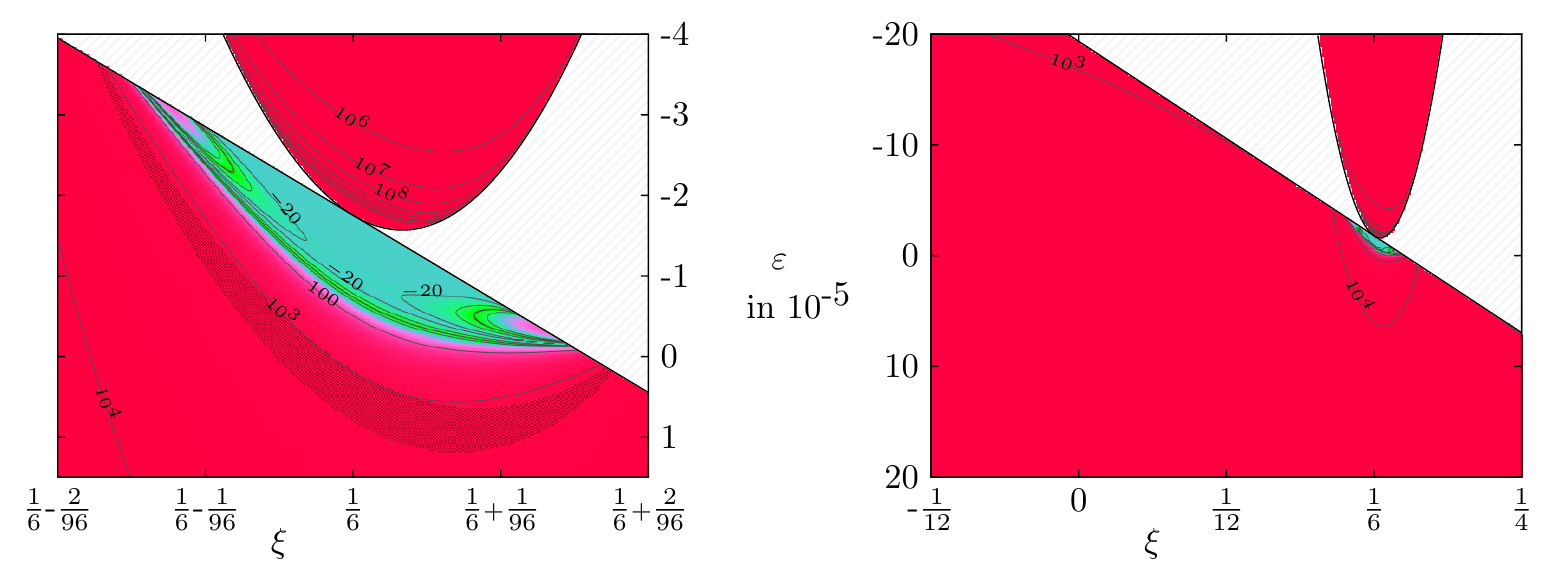}
    
    \begin{minipage}{.9\textwidth}
    \captionsetup{format=plain, labelfont=bf}
    \caption{$\Delta N_\textup{eff}(z=3000)$ as a function of $\xi$ and $\varepsilon$ for fixed remaining parameters as in Section \ref{sec:General_discussion_on_the_decoupled_cSCE}. The gray shaded area marks the parameter region where $\exp\big(\frac{k_1}{k_2}\big)\in\big[\frac{1}{3001},1\big]$ and hence $\Delta N_\textup{eff}(3000)$ does not exist. The colors are determined from the absolute value of $\Delta N_\textup{eff}(3000)$ and we emphasize the sign change along the green stripe, particularly there exists a curve whereon $\Delta N_\textup{eff}(3000)=0$. For orientation, the dashed area in the left graphic marks the parameter region represented in Figure \textup{\ref{parspaceregion-figure}}.\label{neff-figure}}
\end{minipage}
\end{figure}

Figure \ref{neff-figure} shows $\Delta N_\textup{eff}(z=3000)$ as a function of $\xi$ and $\varepsilon$ for points $(\xi,\varepsilon)$ in which $a_\textup{crit}\notin\big[\frac{1}{3001},1\big]$. Again, the diagonal straight line marks $\varepsilon_\textup{crit}(\xi)$ and if we skewed that line to be horizontal we would end up with a graphic that is symmetric with respect to reflection at $\xi=\frac{1}{6}$. On the other hand, the parabola shaped upper bound of the shaded area corresponds to $a_\textup{crit}=\frac{1}{3001}$. In the shaded area,
our solution does not reach $z=3000$ and the $\Delta N_\textup{eff}$-test does not make sense. 
In the left graphic of the figure we included the parameter region from Figure \ref{parspaceregion-figure}.
We find a small region where $\Delta N_\textup{eff}$ is smaller than the experimental error of 0.5 around $\xi=\frac{1}{6}$ and for rather small deviations of $\varepsilon$ from $\varepsilon_\textup{crit}$. 
An interesting feature of this graphic is that the level sets of $\Delta N_\textup{eff}(3000)$ is not too far away from the grey shaded region passing the test of Section \ref{LCDM-uncertainty-band}. Despite the fact that this region fails to pass the $\Delta N_\textup{eff}$ test by three orders of magnitude, one should keep in mind that our reduced model can only give a qualitative and preliminary insight into semiclassical cosmology.

As a main takeaway from this section, the $\Delta N_\textup{eff}$ seems to favour the region of small $\varepsilon$ and $\xi$ close to the case of conformal coupling $\xi=\frac{1}{6}$.

\begin{remark}
\normalfont
Another remarkable alignment is found between the present numerical test and the regions of a divergent $\tau_\textup{BB}$ from Section \ref{sec:Horizon}. The poles labeled by the set $\Sigma$ apparently match with the $\Delta N_\textup{eff}(3000)=0$-level set quite well.
\end{remark}


\section{Conclusion and Outlook}
\label{sec:Conclusion}

In this work we investigated cosmological solutions of the SCE for massless quantum fields in special Minkowski-like states. In such states, the dynamical degrees of freedom from the scale factor decouple from the dynamics of the quantum state, as such states come with a vanishing `tower of moments' in the sense of \cite{siemssen-gottschalk}. While this phenomenon was well known in the conformally coupled case \cite{Starobinski}, we observe some new cases here including also non conformally coupled fields. We thus retrieve new cosmological models from the solutions of the massless SCE.

We provided a detailed numerical study of these new cosmological models.  Typical solutions show a radiation like Big Bang in the early universe {\ifanonymousforreviewreply(\color{blue}\fi{}sufficiently far remote from the Planck scale)} in conjunction with a Dark Energy-like behavior for the late universe. In our models, the late time universe Dark Energy phase is observed without introducing a cosmological constant, neither directly nor through a renormalization constant. Such models expose a smooth transition in the state equation connecting energy and pressure that ranges from the ratio $\frac{1}{3}$ (corresponding to radiation) to $-1$ (corresponding to Dark Energy). 

We also investigate special parameter settings that give rise to a solution of the cosmological horizon problem as proposed by \cite{pinamonti2011initial}. While we give numerical evidence that such solutions exist, we also see that this behavior requires parameter tuning and is not stable under small parameter variations.

A large part of this work is concerned with a numerical comparison of our cosmological models with the $\Lambda$CDM standard model of cosmology. Evidently, we observe deviations in the cold dark matter dominated phase of $\Lambda$CDM cosmology, which in turn is to be expected for a massless quantum field. However, we show that a parameter fit of our cosmological models to the $\Lambda$CDM cosmology endowed with physical parameters \cite{Planck-Collab} remains in the strip of observational 1-$\sigma$-uncertainty of the  $\Lambda$CDM model. Also, we identify `physical' parameter regions that comply with the $\Delta N_\textup{eff}$-test as suggested by \cite{Hack:2015zwa}. Despite that both parameter regions do not have an intersection, they are positioned close to each other in parameter space. Interestingly, these tests seem to favour coupling $\xi$ close to conformal coupling $\xi=\frac{1}{6}$ rather than minimal coupling $\xi=0$ and a small value of the renormalization constant $\varepsilon=3c_3+c_4$. 

While we have gathered evidence that semiclassical cosmology, even without cosmological constant, can produce interesting cosmologies that are not too far from the standard cosmology, further extension and refinement of the model seem to be in order. Obviously, massive fields should be incorporated and also fields with higher spin and Fermi statistics \cite{Hack:2010iw}.

\vspace{.3cm}
\ifanonymousforreviewreply
\else
\noindent\textbf{Acknowledgement.}\\ The authors thank T.-P.\ Hack, N.\ Pinamonti and P.\ Meda for interesting discussions. Moreover, the authors are grateful towards the referees for useful remarks.
\fi
\bibliography{bibliography}{}
\bibliographystyle{plain}

\end{document}